\theoremstyle{plain}
\newtheorem{theorem}{Theorem}[section]
\newtheorem{lemma}{Lemma}[section]
\newtheorem{claim}{Claim}[section]
\newtheorem{proposition}{Proposition}[section]
\newtheorem{corollary}{Corollary}[section]
\theoremstyle{plain}
\newtheorem{definition}{Definition}[section] 
\newtheorem{example}{Example}[section]
\newtheorem{conjecture}[definition]{Conjecture}
\theoremstyle{plain}
\Crefname{algocf}{Algorithm}{Algorithms}
\crefname{claim}{claim}{claims}
\newcommand{\yc}[1]{} 
\newcommand{\jamie}[1]{}
\newcommand{\xhdr}{\paragraph}
\newcommand{\reals}{\mathbb{R}}
\newcommand{\eps}{\varepsilon}
\DeclareMathOperator*{\argmax}{argmax}
\newcommand{\state}{\theta}
\newcommand{\stateCDF}{F}
\newcommand{\statePDF}{f}
\newcommand{\signalProb}{\pi}
\newcommand{\signalspace}{\mathcal S}
\newcommand{\signal}{s}
\newcommand{\cvxsignaling}{\signalProb^{\cc{convex}}}
\newcommand{\OPT}{\cc{OPT}}
\newcommand{\Part}{\cc{Part}}
\newcommand{\instance}{\mathcal{I}}
\newcommand{\instanceClass}{\mathcal{C}}
\newcommand{\MPC}{\cc{MPC}}
\newcommand{\PoE}{\cc{PoE}}
\newcommand{\myopt}[2][]{\cc{OPT}\ifthenelse{\not\equal{}{#1}}{_{#1}}{}\!\left({\def\givenn{\middle|}#2}\right)}
\newcommand{\myoptPart}[2][]{\cc{OPT}^{\cc{part}}\ifthenelse{\not\equal{}{#1}}{_{#1}}{}\!\left({\def\givenn{\middle|}#2}\right)}
\newcommand{\myPoE}[2][]{\cc{PoE}\ifthenelse{\not\equal{}{#1}}{_{#1}}{}\!\left({\def\givenn{\middle|}#2}\right)}
\newcommand{\mpc}[2][]{\cc{MPC}\ifthenelse{\not\equal{}{#1}}{_{#1}}{} ({\def\givenn{\middle|}#2})}
\newcommand{\posteriorMean}{\mu}
\newcommand{\interimU}{u}
\newcommand{\designerExpU}{U}
\newcommand{\partionSet}{A}
\newcommand{\numSignals}{K}
\newcommand{\stateDim}{d}
\newcommand{\rr}{\mathbb R}
\newcommand{\zz}{\mathbb Z}
\newcommand{\abs}[1]{\left|{#1}\right|}
\newcommand{\genseq}[3]{{#1}_1 {#3} {#1}_2 {#3} \dots {#3} {#1}_{#2}}
\newcommand{\seq}[2]{\genseq{#1}{#2}{,}}
\newcommand{\ipncm}[3]{\begin{figure}[H]\begin{center}\includegraphics[scale = {#1}]{Media/#2.pdf}\caption{#3}\end{center}\end{figure}}
\newcommand{\prob}[2][]{\mathrm{Pr}\ifthenelse{\not\equal{}{#1}}{_{#1}}{}\!\left[{\def\givenn{\middle|}#2}\right]}
\newcommand{\expect}[2][]{\mathbb{E}\ifthenelse{\not\equal{}{#1}}{_{#1}}{}\!\left[{\def\givenn{\middle|}#2}\right]}
\newcommand{\sprob}[2][]{\text{Pr}\ifthenelse{\not\equal{}{#1}}{_{#1}}{}[#2]}
\newcommand{\sexpect}[2][]{\mathbb{E}\ifthenelse{\not\equal{}{#1}}{_{#1}}{}[#2]}
\newcommand{\suchthat}{\,:\,}
\newcommand{\dd}{{\mathrm d}}
\newcommand{\indicator}[1]{{\mathbbm{1}\left\{ #1 \right\}}}
\newcommand{\cc}[1]{\ensuremath{\mathsf{#1}}} 
\newcommand{\R}{\mathbb{R}}
\newcommand{\N}{\mathbb{N}}
\newcommand{\Lip}{L}
\newcommand{\vol}{\operatorname{Vol}}
\newcommand{\smid}{\,|\,}
\title{
Explainable Information Design\footnote{A one-page abstract of this work appeared on the ACM Conference on Economics and Computation, 2026, under the title of ``The Price and Complexity of Explainable Information Design''.}
}
\author{
Yiling Chen\thanks{Harvard University, \texttt{yiling@seas.harvard.edu}}
\qquad
Tao Lin\thanks{Microsoft Research \& The Chinese University of Hong Kong, Shenzhen, \texttt{lintao@cuhk.edu.cn}}
\qquad
Wei Tang\thanks{Chinese University of Hong Kong, \texttt{weitang@cuhk.edu.hk}}
\qquad 
Jamie Tucker-Foltz\thanks{Yale University, \texttt{j.tuckerfoltz@yale.edu}}
}
\begin{document}

\maketitle

\begin{abstract}

Optimal signaling schemes in information design (Bayesian persuasion) often involve randomization or disconnected partitions of state space, which might be too intricate to be audited or communicated. 
We propose \emph{explainable information design} in the context of linear information design with a continuous state space.
In the case of single-dimensional state, we restrict the information designer to use interval-partitional signaling schemes defined by deterministic and monotone partitions of the state space, where a unique signal is sent for all states in each part. 
We prove that the price of explainability (PoE) -- the ratio between the performances of the optimal explainable signaling scheme and unrestricted signaling scheme -- is exactly $\sfrac{1}{2}$ in the worst case, meaning that partitional signaling schemes are never worse than arbitrary signaling schemes by a factor of $2$.
For a uniform prior, this PoE can be improved to a tight $\sfrac{2}{3}$. 
We then extend the analysis to multi-dimensional state spaces by studying two notions of explainability: convex-partitional policies and axis-aligned rectangular policies. 
We prove a tight PoE of $\sfrac{1}{(m+1)}$ for convex-partitional policies, while for rectangular policies we establish a PoE guarantee under uniform prior that is independent of the number of signals but unavoidably
exponential in the dimension $m$.
We also study the computational complexity of explainable information design, 
proving that the exactly optimal explainable policy is NP-hard to compute, but an explainable policy with $\sfrac{1}{2}$ approximation guarantee can be computed in polynomial time for piecewise Lipschitz utility functions.

\end{abstract}
\thispagestyle{empty}

\newpage 

\newpage
\setcounter{page}{1}

\section{Introduction}
Information design, or Bayesian persuasion \citep{kamenica_bayesian_2011}, studies how an informed designer can strategically provide information to influence the behavior of a decision-maker 
(see, e.g., the surveys by \citealp{D-17,K-19,BM-19}).
The information designer commits to an information policy (also called a signaling scheme) that maps a state of the world to a possibly randomized signal; the decision-maker, assumed to fully understand the information policy, receives the signal and takes an optimal action according to her posterior belief. 
The information designer aims to find an information policy to maximize his expected utility. 
This framework has inspired an active line of research in economics, operations research, and theoretical computer science, with various real-world applications like online ad auctions (\citealp{EFGPT-14,BBX-18}), price discrimination (\citealp{BMSW-24}), 
contract design (\citealp{BTXZ-24,CC-25}), recommender systems (\citealp{zu_learning_2021, feng_online_2022, hossain_persuasive_2024}), and voting (\citealp{AC-16a, AC-16b}).

The structures of the \emph{optimal} information policies in various information design problems, however, are often opaque. 
In the setting with discrete states and discrete actions, the optimal information policy is the solution to a linear program that may involve randomization in the following form: ``\emph{if the quality of a product is \emph{good} (state 1), always recommend \emph{buy} (signal 1) to the consumer (decision-maker); if the quality of the product is \emph{bad} (state 2), recommend \emph{buy} (signal 1) with probability $43\%$ and \emph{not buy} (signal 2) with probability $57\%$}'' (adopted from \citealp{kamenica_bayesian_2011, dughmi_algorithmic_2016}).
It is unclear how an information designer can credibly commit to such a randomized policy in practice.\footnote{This commitment issue inspires a literature on credible persuasion \citep{lin_credible_2024}.}
Even in settings where the optimal information policy is deterministic, such a policy may still lack explainability. Consider the following example: 

\begin{example} \label{ex:intro-example}
{\rm 
A school is designing a grading scheme that maps a student's GPA, a number in $[0, 100]$, to a letter grade (e.g., A, B, C).  The student's GPA is the state of the world, assumed to be uniformly distributed over $[0, 100]$. There are two hiring companies, 1 and 2, where company 1 is considered better.  A student will be hired by company 1 if the company believes that the student's GPA is in $[80, 100]$; the student will be hired by company 2 if the company believes that the student's GPA is in $[40, 80)$.  The school gets a payoff of $1$ if the student is hired by company 1, payoff of $0.9$ if the student is hired by company 2, payoff of $0$ otherwise.
If no information about the student is revealed, then both companies believe that all students have average GPA $50$, so all students will be hired by company 2 and the school gets payoff $0.9$.

The optimal grading scheme that the school can design turns out to be the following: map GPA in $[0, 2.5] \cup [77.5, 100]$ to grade ``A''; map GPA in $(2.5, 77.5)$ to grade ``B''.  Under this grading scheme, the expected GPA of a student with grade ``A'' is exactly 80, so the student will be hired by company 1; the expected GPA of a student with grade ``B'' is exactly 40, so the student will be hired by company 2.  The first case happens with probability $0.25$ and the second case happens with probability $0.75$.  The school obtains expected payoff $0.925$, larger than the $0.9$ payoff above. 
}
\end{example}

The optimal information policy (grading scheme) illustrated above exhibits a non-monotone structure, mapping two disconnected subsets of states into a single signal.  The non-monotonicity of optimal information policy is known to be unavoidable in information design problems with a continuous state space \citep{arieli_optimal_2023}, but such non-monotonicity lacks explainability and is often undesirable in practice due to various considerations, \taoedit{such as transparency, legal regulations, and cognitive limitations of decision-makers to understand complicated signals \citep{hoong_calibrated_2025}.}
We do not expect a school to implement a non-monotone grading scheme. 

\subsection{Our Contributions}
\label{sec:our-contribution}
Motivated by the non-explainability of optimal information policies (illustrated in \Cref{ex:intro-example}), our work studies an \emph{explainable information design} problem.
We consider an information design environment with
an $m$-dimensional continuous state space $[0, 1]^m$.
The information designer's payoff, captured by function $\interimU: [0, 1]^m \to [0, 1]$, depends on the posterior mean of the state induced by the signal. 
This class of information problems is 
referred to as \emph{linear information design} and 
the $(m=1)$-dimensional case has been widely studied by previous works
(e.g., \citealp{gentzkow_rothschild-stiglitz_2016,DM-19,KMS-21,arieli_optimal_2023,BMSW-24,FHT-24,BMSW-25,KLZ-25, zhao_2025_public}), but the multi-dimensional case is less studied. 
In contrast to prior works that focus on characterizing the optimal -- but intricate -- information policies, we seek to design explainable policies and quantify the 
\emph{cost} to achieve such explainability.

\textbf{Overview of Results.}
We first and primarily study the single-dimensional case where $m=1$.
In the single-dimensional case, we posit that an information policy is explainable if it has a deterministic, monotone, partitional structure: it divides the state space into $K$ consecutive intervals $\{(a_{i-1}, a_i)\}_{i=1}^K$ and deterministically 
reveals which interval the state belongs to. 
We call such information policies \emph{$\numSignals$-(interval-)partitional information policies}. 
Focusing on explainable/partitional information policies, we study two questions:
\begin{itemize} 
\item[(1)] The first question is the \emph{price of explainability (PoE)}: what is the approximation ratio between the optimal explainable information policy and the optimal unrestricted (non-explainable) policy?  We show that the price of explainability is exactly $1/2$ in the worst case (\Cref{thm:K-partitional-1/2}).  This means that, by using an explainable information policy, the information designer can always guarantee $1/2$ of the expected payoff obtained from the optimal unrestricted policy, and this is the best universal guarantee one can hope for.
We then consider an interesting special case where the prior distribuiton over the state is uniform, in which case we prove that the PoE can be improved to $2/3$ and is tight (\Cref{sec:uniform}).


\item[(2)]  The second question is the \emph{computational complexity} of explainable information design: how difficult is it to compute an explainable information policy with a high payoff for the designer? 
We show that, unfortunately, the \emph{optimal} explainable information policy is NP-hard to compute in general (\Cref{thm:optimal-NP-hard}). 
Nevertheless, because an optimal explainable information policy cannot be better than $1/2$ of the optimal unrestricted policy in the worst case (our Result (1)), an alternative goal is to find a (not necessarily optimal) explainable information policy that achieves this $1/2$ approximation guarantee.
We show that such a $1/2$-optimal explainable information policy can be computed efficiently by an algorithm with a polynomial running time, for any prior distribution and any piecewise Lipschitz utility functions (\Cref{thm:1/2-piecewise-Lipschitz}).

\end{itemize}

We also generalize our price of explainability results to the multi-dimensional case, where we consider two notions of ``explainable information policy''.  The first is convex-partitional policies, which divide the state space $[0, 1]^m$ into non-overlapping convex subsets and send one signal for each subset.  We prove that the PoE of those policies is exactly $1/(m+1)$, which directly generalizes our single-dimensional $1/2$ PoE result (see \Cref{thm:convex-1/(m+1)}).  
The second, inspired by the \emph{explainable clustering} literature, is information policies that divide the state space into axis-aligned rectangles, which we prove to have a PoE bound that is independent of the number of rectangles $K$ (see \Cref{thmUniformHigherDimension}) but unavoidably exponential in the dimension $m$ (see \Cref{thm:high-dimension-negative}).

\textbf{Overview of Techniques.}
Our Result (1) about the $1/2$ PoE of $\numSignals$-interval-partitional policies in the single-dimensional case, and the generalization to $1/(m+1)$ PoE of $\numSignals$-convex-partitional policies in the multi-dimensional case, share a common proof, which we overview here.  At a high level, we propose a procedure to convert any \emph{extremal} information policy $\pi^*$, which is an extreme point of the set of all information policies with at most $K$ signals, to an explainable information policy $\pi$ with at most $K$ signals that can achieve at least $1/(m+1)$ fraction of the expected utility of $\pi^*$.
We prove the existence of an optimal information policy that is extremal; this fact is not immediate because the designer's expected utility is not a linear function of the information policy, and the set of all information policies with $K$ signals is an infinite-dimensional non-convex set.  By applying the conversion procedure to the extremal optimal policy $\pi^*$, we obtain an explainable policy with $1/(m+1)$ approximation guarantee, which proves $\PoE \ge 1/(m+1)$. We prove the other direction $\PoE \le 1/(m+1)$ by constructing non-trivial worst-case examples. 

Our conversion procedure from extremal policies to explainable policies make use of some characterizations of extremal information policies from previous works \citep{arieli_optimal_2023, kleiner_extreme_2025}.
Those works first represent an information policy (with potentially infinitely many signals) as a mean-preserving contraction (MPC), which is the distribution over posterior means of the state induced by the policy. Under this representation, the designer’s expected utility becomes a linear function of the MPC, 
implying the existence of an optimal solution among the extreme points of MPCs.
They then show that the extreme points of MPCs have a \emph{$(m+1)$-pooling} structure: the information policy divides the state space $[0, 1]^m$ into convex subsets, and sends at most $m+1$ signals for the states in each subset.

Previous characterizations, however, do not apply to our setting directly because we focus on information policies with a finite number of signals (i.e., $K$-partitional policies have $K$ signals), whose corresponding MPCs is not even a convex set.
Nevertheless, we show that the extreme points of this non-convex MPC set also have the $(m+1)$-pooling structure. 
Then, by inducing the ``most useful'' signal among the $m+1$ signals within each convex subset that the extremal policy divides the state space into, we obtain an explainable policy with $1/(m+1)$ approximation guarantee. 
This $1/(m+1)$-approximation conversion from extremal policy to explainable policy may be of independent interest beyond the proof of this result.  In fact, we also use this conversion to obtain our algorithmic Result (2) and the price of explainability of axis-aligned rectangular information policies. 



For Result (2), to prove the NP-hardness of computing the optimal $\numSignals$-interval-partitional information policy in the single-dimensional case, we reduce from the NP-hard problem \textsc{Partition} \citep{KarpOriginalReductions}.
%
For the positive result, we obtain a $\numSignals$-partitional information policy with $1/2$ approximation guarantee by
first computing the optimal unrestricted policy $\signalProb^*$ (via a polynomial-time algorithm from \cite{arieli_optimal_2019}) and then converting $\signalProb^*$ to a $1/2$-approximation $\numSignals$-partitional policy using the conversion procedure (\Cref{alg:conversion}) we develop in the proof of Result (1).

To establish the PoE for axis-aligned rectangular information policies in the multi-dimensional case, we develop another conversion process to convert any convex-partitional policy into an axis-aligned rectangular policy with an approximation factor that is constant in $K$ but exponential in $m$. This, combined with the $1/(m+1)$ PoE result for convex-partitional policy directly, gives the PoE for axis-aligned rectangular policies. The exponential dependency on dimension $m$ is proven to be unavoidable.

\subsection{Additional Related Work}

\xhdr{Limited signals in information design.}
Recent papers have explored signal-limited information design as a way to reduce the complexity of information policies. 
For example, 
\citet{gradwohl_algorithms_2022} characterize the optimal information structure with limited number of signals
under a symmetry or an independence assumption on the distribution of utility values for the actions.
\citet{aybas2024persuasioncoarsecommunication}
study a canonical Bayesian persuasion model of \cite{kamenica_bayesian_2011} 
and obtain tight bounds on the marginal value of an additional message. 
Closer to our work, \citet{lyu2024coarseinformationdesign} focus on linear information design. 
However, these studies consistently observe that even with a limited number of signals, the resulting optimal policies often remain structurally complex and difficult to explain.
\citet{zhao_2025_public} study the ``price of finiteness'' in linear information design problems, proving that the approximation ratio between signaling schemes with finite and infinite numbers of signals can be arbitrarily close to $0$.
In contrast, our ``price of explainability'' notion compares explainable and non-explainable signaling schemes with the same number of signals.

\xhdr{Partitional information policies.}
Our work focuses on a simple yet expressive class of information policies: (monotone) partitional policies. 
\citet{DM-19} identify necessary and sufficient conditions under which partitional policies are optimal among all possible policies. 
Roughly speaking, their conditions require that the designer's interim payoff function is {\em affine-closed}. Moving forward, \citet{KLZ-25} study the structure of optimal partitional information policies when the payoff function is M-shaped, which does not satisfy the affine-closed condition.
\citet{M-21} and \citet{OR-23} explore partitional policies in information design environments that differ from ours.
%
For non-linear information design problems, \citet{ali_design_2025} identify conditions for when any partitional information policy can be implemented by an equilibrium of a disclosure game where the sender discloses hard evidence about the state to the receiver without committing to the disclosure policy, and when the optimal information policy can be approximated by a \emph{non-monotone} partitional policy. 
%
%
\taoedit{On the more empirical side, \citet{hoong_calibrated_2025} argue that coarse information is beneficial to biased decision-makers in AI-assisted decision-making, and find that partitional/coarse information policies achieve better performance than more informative policies with continuous signal space in experiments with, e.g., loan specialists.  
Their empirical finding supports our focus on partitional information policies, but we take a different approach: we theoretically quantify the worst-case performance gap between partitional information policies and the optimal ones in linear information design environments.}

\xhdr{Explainable clustering.}
Conceptually, our work relates to the growing literature in explainable clustering in machine learning (e.g., \citealp{MDRF-20,MS-22a,CH-22,EMN-22,GPSY-23}), 
where a clustering is deemed explainable if it can be specified by a decision tree consisting of axis-aligned threshold cuts.
A central focus in this line of work is the price of explainability -- typically defined as the worst-case ratio, over all input instances, between the cost of the best explainable clustering 
and that of the optimal (unrestricted) clustering.

We view partitional information policies as naturally explainable, since they can be described by a 
decision tree, aligning with the explainability notion in clustering.
However, the objective function in clustering, such as minimizing the sum of distances 
between a set of data points in $\R^\stateDim$ and their assigned centers,
has a specific geometric structure and 
differs from ours. 
Moreover, 
explainable clustering 
primarily focuses on high-dimensional spaces. The problem becomes trivial in one dimension: the optimal unrestricted clustering is itself partitional, and hence already explainable. 
But in our setting, even with one-dimensional state space, the optimal information policy can be non-trivial, and obtaining an explainable (e.g., partitional) policy remains challenging. As a result, existing approaches from explainable clustering cannot be directly applied to our problem. 


\section{Preliminaries}

\newcommand{\partpolicies}{\mathcal{\partionSet}_\numSignals}

\xhdr{The information design environment.}
A payoff-relevant state $\state \in [0, 1]$ is drawn from an absolutely continuous prior probability distribution $F$ with full support on $[0, 1]$, with CDF also denoted by $\stateCDF$ and density $\statePDF$.
We will discuss the higher dimensional case in \Cref{sec:high-dimension}.
Let $\stateCDF([a, b]) = \prob[\state\sim \stateCDF]{a\le\state\le b} = \int_a^b \statePDF(x) \, \dd x = \stateCDF(b) - \stateCDF(a)$ denote the prior probability of interval $[a, b]$.
An informed information designer 
wants to persuade an uninformed receiver to take an action that the designer prefers. The receiver's optimal action depends on her information about the state, and initially she only knows that the state is distributed according to $\stateCDF$. The designer, who knows $\stateCDF$ and observes the realized state $\state$, may reveal information about the realized state to the receiver.

In particular, the designer can commit to an information policy $(\signalspace, \signalProb)$ where $\signalspace$ is a measurable space, and $\signalProb = (\signalProb(\cdot\smid \state))_{\state\in [0, 1]}: [0, 1] \to \Delta(\signalspace)$ prescribes the conditional probability distribution over the signal space $\signalspace$ given a realized state.
Given the designer's information policy $(\signalspace, \signalProb)$ and if the realized state is $\state$, a signal $\signal \sim \signalProb(\cdot\smid \state)$ will be realized. 
The receiver, who knows the information policy $(\signalspace, \signalProb)$,
observes the signal realization $s$ and updates her posterior beliefs about the state according to Bayes' rule.
This posterior belief determines the receiver's optimal action 
as well as the designer's payoff.
Throughout this work, we will focus on an important class of information design environment, \emph{linear information design}, where the designer's payoff from inducing a posterior belief depends only on the mean of the posterior belief, and is given by an upper semi-continuous function $\interimU: [0, 1] \rightarrow [0, 1]$.\footnote{
Upper semi-continuity is a consequence of the standard assumption that the receiver 
break ties in favor of the designer; 
it ensures that the designer's optimization problem \eqref{eq:OPT-K} has a solution. 
}
We refer to $\interimU$ as the designer's interim utility function.
Given a realized signal $\signal \in \signalspace$, 
the receiver's posterior mean of the state is $\posteriorMean_{\signalProb, \signal} = \frac{\int_0^1 \state \cdot \signalProb(\signal\smid \state) \statePDF(\state) \, \dd \state}{\int_0^1 \signalProb(\signal \smid \state) \statePDF(\state) \, \dd \state} \in [0, 1]$.
Under this information design environment, the designer's expected utility when committing to an information policy $(\signalspace, \signalProb)$ is thus given by\footnote{When the context is clear, we often omit $\signalspace$ and directly refer to $\signalProb$ as the designer's information policy.}
\begin{equation*}
      \designerExpU(\signalProb) \;=\; \int_0^1 \statePDF(\state)\cdot
      \expect[\signal \sim \signalProb(\cdot\mid \state)]{\interimU(\mu_{\signalProb, \signal})}
      \, \dd \state~. 
\end{equation*}
The designer aims to design an information policy that maximizes his expected payoff $\designerExpU(\signalProb)$.
Linear information design has been an important focus and widely studied in 
the literature 
(see the references in \Cref{sec:our-contribution}). 
It is known that the designer's optimal information policy may need to use infinitely many signals and 
admit a complicated structure that lacks explainability (which we will discuss shortly).

The prior probability distribution $\stateCDF$ and the designer's interim utility function $\interimU$ define an information design instance $\instance = (\stateCDF, \interimU)$.
Given a positive integer $\numSignals \ge 1$, we are interested in the designer's problem when he is restricted to use at most $\numSignals$ signals, namely, $|\signalspace| \le \numSignals$.
We refer to an information policy that uses at most $\numSignals$ signals as a $\numSignals$-signaling scheme. 
Let 
\begin{equation} \label{eq:OPT-K}
    \myopt[\instance]{\numSignals} \, = \, \max\nolimits_{(\signalspace, \signalProb): |\signalspace| \le\numSignals} \designerExpU(\signalProb)
\end{equation}
be the designer's optimal payoff when using $K$-signaling schemes. 

\xhdr{Non-explainability of optimal information policy.}
We provide below a simple example to illustrate the non-explainability of optimal information policy.
\begin{example}[Adapted from \citealp{arieli_optimal_2023}]
\label{example:bi-pooling}
{\rm 
Consider the instance $\instance = (\stateCDF, \interimU)$ where $\stateCDF = \mathrm{Uniform}[0, 1]$ and $\interimU$ satisfies $\interimU(\frac{1}{3}) = \interimU(\frac{2}{3}) = 1$ and $\interimU(\posteriorMean) = 0$ for $\posteriorMean \notin\{\frac{1}{3}, \frac{2}{3}\}$.  The following two information policies with $\signalspace = \{\signal_1, \signal_2\}$ are both optimal and the designer obtains maximum expected utility of $1$ (see \Cref{fig:bipooling1}):
\begin{itemize}
    \item A randomized information policy $\signalProb_1$ defined by
    $\signalProb_1(\signal_1 \smid \state) = 1-\state, \signalProb_1(\signal_2 \smid \state) = \state$ for $\state \in [0, 1]$.
    It is easy to see that signal $\signal_1$ induces posterior mean $\posteriorMean_{\signalProb_1, \signal_1} 
    = \frac{1}{3}$ and $\signal_2$ induces posterior mean $\posteriorMean_{\signalProb_1, \signal_2} 
    = \frac{2}{3}$.  
    \item 
    A deterministic information policy $\signalProb_2$ defined by $\signalProb_2(\signal_1 \smid \state) = \indicator{\state\in (\frac{1}{12}, \frac{7}{12})}$, 
    $\signalProb_2(\signal_2 \smid \state) = \indicator{\state\in [0, \frac{1}{12}] \cup [\frac{7}{12}, 1]}$ for $\state\in [0, 1]$.
    Same as the previous policy, signals $\signal_1, \signal_2$ 
    induce 
    posterior means $\frac{1}{3}, \frac{2}{3}$, respectively.
\end{itemize}}
\end{example}

\begin{figure}[h] 
  \centering
  \begin{subfigure}[b]{0.49\textwidth}
    \centering
    \resizebox{\linewidth}{!}{
\begin{tikzpicture}[
  >=stealth,
  line cap=round,
  thick,
  scale=1,
  every node/.append style={font=\scriptsize}
]

  \def\L{1}      
  \def\R{6}      
  \def\Ytop{1.7}   
  \def\Ybot{0}   
  \pgfmathsetmacro{\xone}{\L+(\R-\L)/3}
  \pgfmathsetmacro{\xtwo}{\L+2*(\R-\L)/3}
  
  \draw[->] (\L-0.4,\Ytop-0.2) -- (\R+0.5,\Ytop-0.2);     
  \node[above] at (\L-0.12,\Ytop+0.15) {0};
  \node[above] at (\R+0.12,\Ytop+0.15) {1};
  \draw[->] (\L-0.4,\Ybot) -- (\R+0.5,\Ybot);     
  \node[below] at (\L-0.12,\Ybot-0.03) {0};
  \node[below] at (\R+0.12,\Ybot-0.03) {1};
  \node[below] at (\R+0.55,\Ytop-0.2) {$\state$};
  \node[below] at (\R+0.55,\Ybot-0.04) {$\posteriorMean$};
  \draw[dashed] (\L,\Ybot) -- (\L,\Ytop);
  \draw[dashed] (\R,\Ybot) -- (\R,\Ytop);

  \draw[pattern=north east lines,pattern color=blue]
    (\L,\Ytop-0.2) rectangle (\R,\Ytop+0.2);
  
  \path[pattern=north west lines,pattern color=red]
    (\L,\Ytop-0.2) rectangle (\R,\Ytop+0.2);

  \draw[decorate,
        decoration={brace,mirror,amplitude=6pt,raise=4pt, aspect=2/3},
        red,thick]
    (\L,\Ytop-0.1) -- (\R,\Ytop-0.1);

    \draw[decorate,
        decoration={brace,mirror,amplitude=6pt,raise=4pt, aspect=1/3},
        blue,thick]
    (\L,\Ytop-0.3) -- (\R,\Ytop-0.3);

  \fill[blue!20] (\xone-0.3,\Ybot) rectangle (\xone+0.3,\Ybot+0.8);
  \draw          (\xone-0.3,\Ybot) rectangle (\xone+0.3,\Ybot+0.8);
  \draw[->, blue] (\xone, \Ytop-0.65) -- (\xone,\Ybot+0.82); 

  \fill[red!20] (\xtwo-0.3,\Ybot) rectangle (\xtwo+0.3,\Ybot+0.8);
  \draw         (\xtwo-0.3,\Ybot) rectangle (\xtwo+0.3,\Ybot+0.8);
  \draw[->, red] (\xtwo,\Ytop-0.45) -- (\xtwo,\Ybot+0.82); 

  \node[below] at (\xone, \Ybot) {1/3};
  \node[below] at (\xtwo, \Ybot) {2/3};
  \fill (\xone, \Ybot) circle[radius=2pt];
  \fill (\xtwo, \Ybot) circle[radius=2pt];
\end{tikzpicture}}
    \caption{The (randomized) information policy $\signalProb_1$}
    \label{fig:bipool-rand}
  \end{subfigure}%
  \hfill
  \begin{subfigure}[b]{0.49\textwidth}
    \centering
    \resizebox{\linewidth}{!}{
\begin{tikzpicture}[
  >=stealth,
  line cap=round,
  thick,
  scale=1,
  every node/.append style={font=\scriptsize}
]

  \def\L{1}      
  \def\R{6}      
  \def\Ytop{1.7}   
  \def\Ybot{0}   
  \pgfmathsetmacro{\xone}{\L+(\R-\L)/3}
  \pgfmathsetmacro{\xtwo}{\L+2*(\R-\L)/3}
  
  \draw[->] (\L-0.4,\Ytop-0.2) -- (\R+0.5,\Ytop-0.2);     
  \node[above] at (\L-0.12,\Ytop+0.15) {0};
  \node[above] at (\R+0.12,\Ytop+0.15) {1};
  \draw[->] (\L-0.4,\Ybot) -- (\R+0.5,\Ybot);     
  \node[below] at (\L-0.12,\Ybot-0.03) {0};
  \node[below] at (\R+0.12,\Ybot-0.03) {1};
  \node[below] at (\R+0.55,\Ytop-0.2) {$\state$};
  \node[below] at (\R+0.55,\Ybot-0.04) {$\posteriorMean$};
  \draw[dashed] (\L,\Ybot) -- (\L,\Ytop);
  \draw[dashed] (\R,\Ybot) -- (\R,\Ytop);
  
  \pgfmathsetmacro{\M}{\L+(\R-\L)/12} 
  \pgfmathsetmacro{\N}{\L+7*(\R-\L)/12}

  \draw[pattern=north east lines,pattern color=blue, draw=black]
    (\M,\Ytop-0.2) rectangle (\N,\Ytop+0.2);
  
  \path[pattern=north west lines,pattern color=red, draw=black]
    (\L,\Ytop-0.2) rectangle (\M,\Ytop+0.2);
  
  \path[pattern=north west lines,pattern color=red, draw=black]
    (\N,\Ytop-0.2) rectangle (\R,\Ytop+0.2);

  \node[above] at ({\L+(\R-\L)/12},\Ytop + 0.1) {1/12};
  \node[above] at ({\L+ 7*(\R-\L)/12},\Ytop + 0.1) {7/12};

  \fill[blue!20] (\xone-0.3,\Ybot) rectangle (\xone+0.3,\Ybot+0.8);
  \draw          (\xone-0.3,\Ybot) rectangle (\xone+0.3,\Ybot+0.8);
  \draw[->, blue] ({\L+4*(\R-\L)/12},\Ytop- 0.2) -- (\xone,\Ybot + 0.82); 

  \fill[red!20] (\xtwo-0.3,\Ybot) rectangle (\xtwo+0.3,\Ybot+0.8);
  \draw         (\xtwo-0.3,\Ybot) rectangle (\xtwo+0.3,\Ybot+0.8);
  \draw[->, red] ({\L+(\R-\L)/24},\Ytop-0.2) -- (\xtwo-0.33,\Ybot+0.81); 
  \draw[->, red] ({\L+9.5*(\R-\L)/12},\Ytop- 0.2) -- (\xtwo+0.3,\Ybot+0.83); 

  \node[below] at (\xone, \Ybot) {1/3};
  \node[below] at (\xtwo, \Ybot) {2/3};
  \fill (\xone, \Ybot) circle[radius=2pt];
  \fill (\xtwo, \Ybot) circle[radius=2pt];
\end{tikzpicture}}
    \caption{The (deterministic) information policy $\signalProb_2$}
    \label{fig:bipool-deter}
  \end{subfigure}
  \caption{Illustration of (non-explainable) optimal information policies in \Cref{example:bi-pooling}.
  The top rectangle represents the state space with uniform prior, and the two bottom rectangles represent the induced posterior means. }
  \label{fig:bipooling1}
\end{figure}
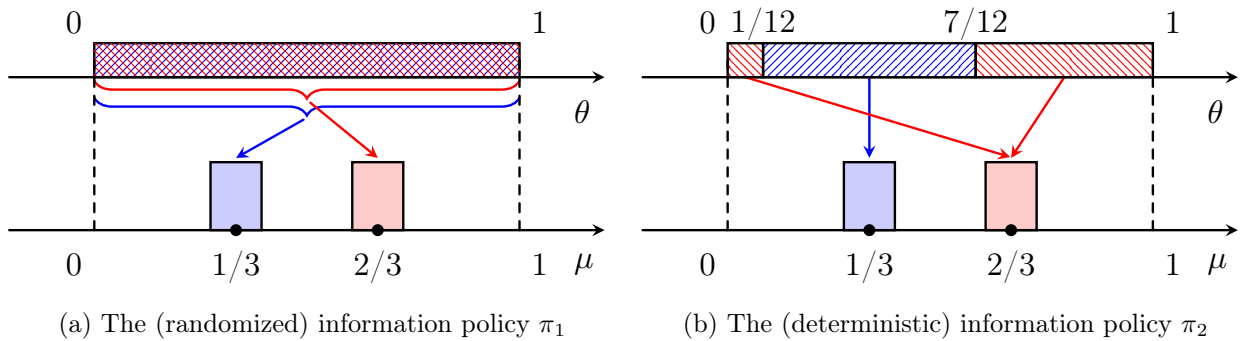



As we can see from \Cref{example:bi-pooling}, 
the unrestricted optimal policy can be randomized 
(policy $\signalProb_1$ randomizes between two signals conditioning on each state) or deterministic but non-monotone (policy $\signalProb_2$ sends $s_2$ when the state falls in two disconnected intervals). 
More generally, the optimal information policies in single-dimensional linear information design problems are known to have the following \emph{bi-pooling} property \citep{candogan2019optimality, KMS-21, arieli_optimal_2023}: 

\begin{definition}[Bi-pooling] \label{def:bi-pooling}
A $K$-signaling scheme $\signalProb$ is \emph{bi-pooling} if it divides the state space $[0, 1]$ into consecutive intervals $\{(b_{i-1}, b_i)\}_{i}$ and for all states in each interval, 
the policy sends either one or two signals. 
When it sends two signals for an interval, it can randomize or pool states in disconnected parts into one signal as in \Cref{example:bi-pooling}.
\end{definition}


The randomization and non-monotonicity (pooling of disconnected parts) of optimal information policies are often undesirable in practice.
As mentioned in the Introduction, practical considerations often demand an information policy be easily interpretable.


\xhdr{Explainable information design.}
Motivated by the above observations, we define the following class of simple and
explainable information policies: 
\begin{definition}[Interval-partitional information policy]
\wtedit{For a positive integer $\numSignals$, let
\begin{align*}
    \partpolicies
    :=
    \left\{
    (a_1,\ldots,a_\numSignals)\in[0,1]^K:
    0\le a_1\le \cdots \le a_{K-1}\le a_\numSignals=1
    \right\}\,.
\end{align*}
For any $\partionSet=(a_1,\ldots,a_\numSignals)\in\partpolicies$, with $a_0=0$, the corresponding $\numSignals$-(interval-)partitional information policy $\signalProb_\partionSet$ divides the state space $[0,1]$ into the consecutive intervals
$(a_{i-1},a_i)$, $i\in[\numSignals]$,\footnote{We use $[\numSignals]$ to denote $\{1, \ldots, \numSignals\}$.}  and 
sends a unique signal for all states in each interval $(a_{i-1},a_i)$.}\footnote{For states at the partition points $\{a_i\}_{i\in[\numSignals]}$, $\signalProb_\partionSet$ can send an arbitrary signal.  As the prior distribution is continuous, those points have probability $0$ and will not affect the designer's expected utility.
} 
\end{definition}
We highlight a few desirable properties of a partitional information policy: 
(1) \emph{deterministic}: it sends only one signal in each interval; 
(2) \emph{monotone} (according to the definition by, e.g., \citealp{KLZ-25}): it divides the state space into convex sets (intervals) and simply reveals which convex set contains the realized state.
Note that a $\numSignals$-partitional information policy can be implemented by a decision tree with $\numSignals$ leaves.
This aligns with the literature on explainable clustering 
(see Related Works) 
where an explainable clustering is one that can be described by a (potentially small) decision tree.

Given an information design instance $\instance = (\stateCDF, \interimU)$,
an optimal $\numSignals$-partitional information policy is a solution to the following optimization problem: 
\begin{equation}\label{eq:OPT-part-K}
    \myoptPart[\instance]{\numSignals} \,=\,  \max\nolimits_{\partionSet \wtedit{\in \partpolicies}} \designerExpU(\signalProb_{\partionSet})~. 
\end{equation}
Because a $\numSignals$-partitional information policy uses at most $\numSignals$ signals, we have $\myoptPart[\instance]{\numSignals} \le \myopt[\instance]{\numSignals}$ by definition.
Below we revisit \Cref{example:bi-pooling} and illustrate the structure of $\numSignals$-partitional 
policies. 

\begin{figure}[h!] 
  \centering
  \begin{subfigure}[b]{0.49\textwidth}
    \centering
    \resizebox{\linewidth}{!}{
\begin{tikzpicture}[
  >=stealth,
  line cap=round,
  thick,
  scale=1,
  every node/.append style={font=\scriptsize}
]
  \def\L{1}      
  \def\R{6}      
  \def\Ytop{1.45}   
  \def\Ybot{0}   
  \pgfmathsetmacro{\xone}{\L+(\R-\L)/3}
  \pgfmathsetmacro{\xtwo}{\L+2*(\R-\L)/3}
  
  \draw[->] (\L-0.4,\Ytop-0.2) -- (\R+0.5,\Ytop-0.2);     
  \node[above] at (\L-0.12,\Ytop+0.15) {0};
  \node[above] at (\R+0.12,\Ytop+0.15) {1};
  \draw[->] (\L-0.4,\Ybot) -- (\R+0.5,\Ybot);     
  \node[below] at (\L-0.12,\Ybot-0.03) {0};
  \node[below] at (\R+0.12,\Ybot-0.03) {1};
  \node[below] at (\R+0.55,\Ytop-0.2) {$\state$};
  \node[below] at (\R+0.55,\Ybot-0.04) {$\posteriorMean$};
  \draw[dashed] (\L,\Ybot) -- (\L,\Ytop);
  \draw[dashed] (\R,\Ybot) -- (\R,\Ytop);

  \node[above] at (\xtwo,\Ytop + 0.1) {2/3};
  
  \draw[pattern=north east lines,pattern color=blue, draw=black]
    (\L,\Ytop-0.2) rectangle (\xtwo,\Ytop+0.2);
  
  \path[pattern=north west lines,pattern color=gray, draw=black]
    (\xtwo,\Ytop-0.2) rectangle (\R,\Ytop+0.2);
 

  \fill[blue!20] (8/3-0.3,\Ybot) rectangle (8/3+0.3,\Ybot+0.8);
  \draw          (8/3-0.3,\Ybot) rectangle (8/3+0.3,\Ybot+0.8);
  \draw[->, blue] ({\L+2*(\R-\L)/6},\Ytop- 0.2) -- (8/3,\Ybot + 0.82);

  \fill[gray!20] ({\L+5*(\R-\L)/6 -0.3},\Ybot) rectangle ({\L+5*(\R-\L)/6 +0.3},\Ybot+0.4);
  \draw         ({\L+5*(\R-\L)/6 -0.3},\Ybot) rectangle ({\L+5*(\R-\L)/6 +0.3},\Ybot+0.4);
  \draw[->, gray] ({\L+5*(\R-\L)/6},\Ytop-0.2) -- ({\L+5*(\R-\L)/6}, \Ybot + 0.42);

  \node[below] at ({\L+2*(\R-\L)/6},\Ybot) {1/3};
  \node[below] at ({\L+5*(\R-\L)/6},\Ybot) {5/6};
  \fill ({\L+(\R-\L)/3}, \Ybot) circle[radius=2pt];
  \fill ({\L+(\R-\L)*2/3}, \Ybot) circle[radius=2pt];
\end{tikzpicture}}
  \end{subfigure}%
  \hfill
  \begin{subfigure}[b]{0.49\textwidth}
    \centering
    \resizebox{\linewidth}{!}{
\begin{tikzpicture}[
  >=stealth,
  line cap=round,
  thick,
  scale=1,
  every node/.append style={font=\scriptsize}
]

  \def\L{1}      
  \def\R{6}      
  \def\Ytop{1.45}   
  \def\Ybot{0}   
  \pgfmathsetmacro{\xone}{\L+(\R-\L)/3}
  \pgfmathsetmacro{\xtwo}{\L+2*(\R-\L)/3}
  
  \draw[->] (\L-0.4,\Ytop-0.2) -- (\R+0.5,\Ytop-0.2);     
  \node[above] at (\L-0.12,\Ytop+0.15) {0};
  \node[above] at (\R+0.12,\Ytop+0.15) {1};
  \draw[->] (\L-0.4,\Ybot) -- (\R+0.5,\Ybot);     
  \node[below] at (\L-0.12,\Ybot-0.03) {0};
  \node[below] at (\R+0.12,\Ybot-0.03) {1};
  \node[below] at (\R+0.55,\Ytop-0.2) {$\state$};
  \node[below] at (\R+0.55,\Ybot-0.04) {$\posteriorMean$};
  \draw[dashed] (\L,\Ybot) -- (\L,\Ytop);
  \draw[dashed] (\R,\Ybot) -- (\R,\Ytop);
  
  \node[above] at ({\L+(\R-\L)/6},\Ytop + 0.1) {1/6};
  \node[above] at ({\L+ 3*(\R-\L)/6},\Ytop + 0.1) {1/2};
  \node[above] at ({\L+ 5*(\R-\L)/6},\Ytop + 0.1) {5/6};


  \pgfmathsetmacro{\Mone}{\L+(\R-\L)/6} 
  \pgfmathsetmacro{\Mtwo}{\L+3 * (\R-\L)/6} 
  \pgfmathsetmacro{\Mthree}{\L+5 * (\R-\L)/6} 

  \draw[pattern=north west lines,pattern color=gray, draw=black]
    (\L,\Ytop-0.2) rectangle (\Mone,\Ytop+0.2);
  
  \path[pattern=north east lines,pattern color=blue, draw=black]
    (\Mone,\Ytop-0.2) rectangle (\Mtwo,\Ytop+0.2);
  
  \path[pattern=north west lines,pattern color=red, draw=black]
    (\Mtwo,\Ytop-0.2) rectangle (\Mthree,\Ytop+0.2);

  \path[pattern=north east lines,pattern color=gray, draw=black]
    (\Mthree,\Ytop-0.2) rectangle (\R,\Ytop+0.2);

  \fill[blue!20] (8/3-0.3,\Ybot) rectangle (8/3+0.3,\Ybot+0.4);
  \draw          (8/3-0.3,\Ybot) rectangle (8/3+0.3,\Ybot+0.4);
  \draw[->, blue] ({\L+1*(\R-\L)/3},\Ytop - 0.2) -- (8/3,\Ybot + 0.42); 
  
  \fill[red!20] (13/3-0.3,\Ybot) rectangle (13/3+0.3,\Ybot+0.4);
  \draw         (13/3-0.3,\Ybot) rectangle (13/3+0.3,\Ybot+0.4);
  \draw[->, red] ({\L+2 * (\R-\L)/3},\Ytop-0.2) -- (13/3,\Ybot + 0.42); 

  \node[below] at (\xone,\Ybot) {1/3};
  \node[below] at (\xtwo,\Ybot) {2/3};
  \fill (\xone, \Ybot) circle[radius=2pt];
  \fill (\xtwo, \Ybot) circle[radius=2pt];
\end{tikzpicture}}
  \end{subfigure}
  \caption{Illustration of partitional information policies in \Cref{ex:continued}. Left: an optimal $(\numSignals = 2)$-partitional policy. Right: an optimal $(\numSignals = 4)$-partitional policy. Both of them obtain expected utility $2/3$. Blue and red signals induce desired posterior means; gray signals have interim utility $0$. }
  \label{fig:bipooling2}
\end{figure}

\begin{example}[\Cref{example:bi-pooling} continued]
\label{ex:continued}
{\rm In \Cref{example:bi-pooling}, an optimal $(\numSignals = 2)$-partitional information policy is to have two intervals $[0, \frac{2}{3}]$ and $(\frac{2}{3}, 1]$, with posterior means $\frac{1}{3}$ and $\frac{5}{6}$ respectively. The first posterior mean has interim utility $1$ and is induced with probability $F([0, \frac{2}{3}]) = \frac{2}{3}$, while the second posterior mean has interim utility $0$. The information designer's expected utility is thus $\frac{2}{3}$.
An optimal $(\numSignals = 4)$-partitional information policy have intervals $[0, \frac{1}{6}), [\frac{1}{6}, \frac{1}{2}), [\frac{1}{2}, \frac{5}{6}), [\frac{5}{6}, 1]$, where the second and third intervals induce posteriors $\frac{1}{3}$ and $\frac{2}{3}$ with total probability $\frac{2}{3}$. 
In this instance, no $K$-partitional information policy can induce the two desired posteriors $\frac{1}{3}$ and $\frac{2}{3}$ with total probability more than $\frac{2}{3}$, so $\myoptPart[\instance]{\numSignals \ge 2} = \frac{2}{3} < \myopt[\instance]{\numSignals \ge 2} = 1$.
}
\end{example}

\xhdr{Price of explainability.}
How much utility will be lost for the designer when restricting to use explainable information policies? 
We define the ratio between the objective values of the optimal $\numSignals$-partitional information policy and the optimal $\numSignals$-signaling scheme to be the \emph{price of explainability} (PoE). 
For a class $\instanceClass$ of information design instances, the PoE is defined to be the worst-case PoE over all instances in the class.

\begin{definition}[Price of explainability]
\label{defn:poe}
Given an information design problem instance $\instance = (\stateCDF, \interimU)$, the {\em price of explainability} of $\numSignals$-partitional information policies is 
\begin{align*}
    \myPoE{\instance, \numSignals} \, :=  \, \frac{\myoptPart[\instance]{\numSignals}}{\myopt[\instance]{\numSignals}}~.
\end{align*}
For a class of instances $\instanceClass$, we define $\myPoE{\instanceClass, K} := \inf_{\instance \in \instanceClass} \myPoE{\instance, \numSignals}$.
\end{definition}

\section{Price of Explainable Information Design}
\label{sec:price-of-explainable}


We study the price of explainability for single-dimensional linear information design in this section. 
Previous work 
on linear information design
\citep{lyu2024coarseinformationdesign, kleiner_extreme_2025} has shown that, when the information designer's interim utility function $\interimU$ is convex, concave, or S-shaped (convex in $[0, t]$ and concave in $[t, 1]$ for some $t$), $\numSignals$-partitional information policies are optimal among unrestricted $\numSignals$-signaling schemes, namely, $\myoptPart[\instance]{\numSignals} = \myopt[\instance]{\numSignals}$. This means that, in our terminology, the price of explainability is $1$. 

\begin{proposition}[PoE for special utility functions]
\label{prop:K-partitional-convex-concave}
For any single-dimensional linear information design instance $\instance = (\stateCDF, \interimU)$ with a convex, concave, or S-shaped interim utility function $\interimU$, for any $\numSignals \ge 1$, 
$\myPoE{\instance, \numSignals} = 1$.
\end{proposition}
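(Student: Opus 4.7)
The plan is to establish $\myoptPart[\instance]{\numSignals} = \myopt[\instance]{\numSignals}$ by a case analysis on the shape of $\interimU$, exhibiting in each case a $\numSignals$-partitional scheme that is at least as good as any $\numSignals$-signaling scheme. Since $\myoptPart[\instance]{\numSignals} \le \myopt[\instance]{\numSignals}$ holds by definition, only the reverse direction needs work; the proposition essentially restates structural results of \cite{lyu2024coarseinformationdesign} for these utility classes.

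For concave $\interimU$, Jensen's inequality gives $\designerExpU(\signalProb) \le \interimU(\E[\state])$ for every signaling scheme $\signalProb$, since the induced posterior-mean distribution has mean $\E[\state]$. The bound is attained by the single-signal (no-information) scheme, which is $\numSignals$-partitional for every $\numSignals \ge 1$ by using $\numSignals - 1$ empty intervals. Thus $\myoptPart[\instance]{\numSignals} = \myopt[\instance]{\numSignals} = \interimU(\E[\state])$.

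For convex $\interimU$, I plan a monotone rearrangement. Given any $\numSignals$-signaling scheme $\signalProb$ with posteriors sorted as $\posteriorMean_1 \le \cdots \le \posteriorMean_\numSignals$ and signal probabilities $p_1, \ldots, p_\numSignals$, set $a_i = \stateCDF^{-1}(\sum_{j\le i} p_j)$ and form the $\numSignals$-partitional scheme $\signalProb_\partionSet$ whose $i$-th interval is $(a_{i-1}, a_i)$ with posterior $\posteriorMean_i' = \E[\state \mid \state \in (a_{i-1}, a_i)]$. Since $(0, a_i)$ is the leftmost subset of $[0,1]$ with $\stateCDF$-mass $\sum_{j \le i} p_j$, the cumulative-mean inequality $\sum_{j \le i} p_j \posteriorMean_j' \le \sum_{j \le i} p_j \posteriorMean_j$ holds for every $i$ (with equality at $i = \numSignals$). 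Combined with the fact that the two induced posterior distributions assign the same weight $p_i$ to their $i$-th atom, this inequality identifies $\sum_i p_i \delta_{\posteriorMean_i'}$ as a mean-preserving spread of $\sum_i p_i \delta_{\posteriorMean_i}$, and convexity of $\interimU$ then yields $\designerExpU(\signalProb_\partionSet) \ge \designerExpU(\signalProb)$.

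For S-shaped $\interimU$ with inflection at some $t \in [0, 1]$, the plan is to combine the two preceding arguments with the bi-pool characterization of extreme-point $\numSignals$-signaling schemes \citep{arieli_optimal_2023}: write an optimal $\signalProb^*$ as a union of consecutive intervals each carrying one or two signals. Bi-pool intervals contained in the concave region $[t, 1]$ collapse to single signals by Jensen, and bi-pool intervals contained in the convex region $[0, t]$ become monotone two-piece partitions by the rearrangement argument. The main obstacle will be a bi-pool interval that straddles $t$: I plan to exploit the fact that on such an interval the concave envelope of $\interimU$ agrees with $\interimU$ on the concave side and is linear on the convex side, so the two bi-pool posteriors lie in a constrained configuration that can be replaced by a monotone two-piece partition, possibly with a split near $t$, without decreasing expected utility. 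This yields a $\numSignals$-partitional scheme that matches $\signalProb^*$.
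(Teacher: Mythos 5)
The paper itself offers no proof of this proposition: it is stated as a restatement of the structural results of \citet{lyu2024coarseinformationdesign}, which show $\myoptPart[\instance]{\numSignals} = \myopt[\instance]{\numSignals}$ for convex, concave, and S-shaped $\interimU$. Your concave and convex cases are correct and self-contained, and in that sense go beyond what the paper writes down: Jensen's inequality settles the concave case, and your quantile-matched monotone rearrangement settles the convex case, since the partial-sum inequality at the breakpoints $q_i = \sum_{j\le i} p_j$ (together with equal masses and equal means) is exactly the Lorenz-type criterion for the new posterior-mean distribution to dominate the old one in the convex order, and checking it only at breakpoints suffices because the integrated quantile difference is piecewise linear there.

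The genuine gap is in the S-shaped case, precisely at the bi-pooling interval straddling the inflection point $t$ -- which is the crux of that case, since intervals contained in $[0,t]$ or $[t,1]$ reduce to your first two arguments. What you offer there ("the two bi-pool posteriors lie in a constrained configuration that can be replaced by a monotone two-piece partition, possibly with a split near $t$") is a plan, not an argument. Concretely: conditional on a straddling interval, the feasible bi-pools form a two-parameter family (the weight $p$ and the lower posterior $\posteriorMean_{i,1}$, with $\posteriorMean_{i,2}$ pinned down by the conditional-mean constraint), while monotone two-piece partitions of that interval form only a one-parameter family of maximal spreads; since $\interimU$ is neither convex nor concave on the interval, neither Jensen nor the mean-preserving-spread comparison applies, and "more spread" is not automatically better (nor is collapsing to one signal). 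Your concave-envelope remark does not by itself yield the needed inequality that for every feasible $(p, \posteriorMean_{i,1}, \posteriorMean_{i,2})$ some cut $c$ gives a monotone partition with weakly higher conditional utility; establishing this is essentially re-proving the two-signal S-shaped case of \citet{lyu2024coarseinformationdesign} (e.g., via upper-censorship/tangency or duality arguments), so as written the S-shaped part of the proposition is not proved. Two smaller points: for the bi-pooling structure of an optimal $\numSignals$-signaling scheme you should invoke the paper's Corollary \ref{cor:optimal-bi-pooling} (finite-support version) rather than \citet{arieli_optimal_2023} alone, and in the convex-region intervals your rearrangement is safe only because the new posteriors stay inside the interval, a point worth stating explicitly.
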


\subsection{Tight Characterization of PoE for General Instances}
Our main result in this section is a tight characterization of the price of explainability for general instances 
in the single-dimensional case.
We show that the price of explainability in any instance is at least $1/2$; in other words, the optimal $\numSignals$-partitional signaling scheme always provides a $1/2$ approximation to the optimal unrestricted $\numSignals$-signaling scheme: $\myoptPart[\instance]{\numSignals} \ge \frac{1}{2} \myopt[\instance]{\numSignals}$. 
We also show that this result is tight: there exist instances where the price of explainability is arbitrarily close to $1/2$. 
\begin{theorem}[PoE for general utility functions]
\label{thm:K-partitional-1/2}
In the $(m=1)$-dimensional case, we have the following characterization of $\PoE$:
\begin{itemize}
    \item $\PoE$ {\em \textbf{lower bound}}:
    For any information design instance $\instance = (\stateCDF, \interimU)$, for any $\numSignals\ge 2$, the price of explainability $\myPoE{\instance, \numSignals} \ge 1/2$. 
    \item $\PoE$ {\em \textbf{upper bound}}:
    For any $\eps>0$, there exists an instance $\instance = (\stateCDF, \interimU)$ such that, for any $\numSignals\ge 2$, the price of explainability $\myPoE{\instance, \numSignals} \le 1/2 + \eps$. 
\end{itemize}
As a result, for the class $\mathcal C_{\mathrm{all}}$ of all instances with continuous $F$ and arbitrary $u$, we have 
    \begin{equation*}
        \myPoE{\mathcal C_{\mathrm{all}}, \numSignals} ~ = \, \inf_{\instance \in \mathcal C_{\mathrm{all}}}  \frac{\myoptPart[\instance]{\numSignals}}{\myopt[\instance]{\numSignals}} ~ = ~ \frac{1}{2}. 
    \end{equation*}
\end{theorem}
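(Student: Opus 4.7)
My plan is to prove the lower bound and upper bound separately. For the lower bound, I would first characterize the optimal unrestricted $\numSignals$-signaling scheme as having a structured ``bi-pooling'' form, and then design a deterministic conversion from any bi-pooling scheme into a $\numSignals$-partitional scheme that captures at least half of its utility. For the upper bound, I would construct a family of instances whose ratio approaches $1/2$.

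For the structural step, I would follow the mean-preserving contraction (MPC) technique of \citet{DM-19, arieli_optimal_2023}: identify each signaling scheme with the distribution $G$ over posterior means it induces, so that $G$ is an MPC of $\stateCDF$ and the designer's payoff is the linear functional $G\mapsto\int\interimU\,\dd G$. The classical result characterizes the extreme points of the MPC polytope as \emph{bi-pooling distributions}, corresponding to schemes that partition $[0,1]$ into consecutive intervals, each of which is either pooled to a single posterior mean or split between exactly two posterior means inside the interval. Restricting to $\numSignals$ signals makes the admissible set non-convex, but I would argue that any $\numSignals$-signal MPC can be written as a convex combination of bi-pooling MPCs using no more signals, and then pass to the best component. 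This produces an optimal bi-pooling scheme $\signalProb^*$ whose total signal count (summed over all intervals) is at most $\numSignals$.

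Given $\signalProb^*$, I would construct a partitional scheme $\signalProb_\partionSet$ interval by interval, keeping the total partition count at most $\numSignals$. Pooling intervals are already partitional and retained as-is. For each bi-pooling interval $[a,b]$ whose two signals induce posterior means $\mu_1<\mu_2$ with masses $p_1,p_2$, I consider two candidate replacements that respect the interval's two-signal budget: a \emph{merge} replacement sending all of $[a,b]$ to a single signal with posterior $(p_1\mu_1+p_2\mu_2)/(p_1+p_2)$, and a \emph{split} replacement at the unique cut $c\in(a,b)$ for which the left conditional mean equals $\mu_1$ (existence by continuity of $c\mapsto\expect{\state\mid\state\in[a,c]}$). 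The key lemma I would need is that the better of these two candidates recovers at least half of $p_1\interimU(\mu_1)+p_2\interimU(\mu_2)$. Summing over intervals gives $\designerExpU(\signalProb_\partionSet)\ge\tfrac12\designerExpU(\signalProb^*)=\tfrac12\myopt[\instance]{\numSignals}$, hence $\myPoE{\instance,\numSignals}\ge 1/2$.

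For the upper bound, I would construct, for each $\eps>0$, an instance generalizing Example~\ref{example:bi-pooling}: a uniform prior on $[0,1]$ and a utility function with spikes at two asymmetrically chosen posterior means $\mu_1,\mu_2$ such that (i) a bi-pooling scheme can induce both spikes with total probability approaching~$1$, while (ii) any monotone partition can match only one spike on substantial mass, driving the partitional optimum toward~$1/2$. The instance extends to all $\numSignals\ge 2$ by padding with pooled intervals on the remainder of $[0,1]$. I expect the conversion lemma in the lower bound to be the main obstacle: the factor-of-two bound must hold on every bi-pool, yet the split option's right posterior typically moves away from $\mu_2$ (possibly into a region where $\interimU$ vanishes) and the merge posterior lies strictly between $\mu_1$ and $\mu_2$ (also possibly a zero of $\interimU$). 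Establishing the lemma likely requires a careful mass-weighted comparison exploiting the mean constraint $p_1\mu_1+p_2\mu_2=(p_1+p_2)\bar\mu$ to control the losses of the two candidates, and tightening the upper-bound instance beyond the ratio $2/3$ of Example~\ref{example:bi-pooling} requires fine-tuning the spike locations so that no clever partition can recover more than half of the optimum.
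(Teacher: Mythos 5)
The central gap is in the conversion lemma, which you correctly flag as the main obstacle but leave unresolved, and the candidate set you propose cannot deliver the factor $\tfrac12$. With only a \emph{merge} option and a \emph{split} that always preserves $\mu_1$ on the left, take a bi-pooled interval where $\interimU(\mu_1)=0$, $\interimU(\mu_2)=1$, and $\interimU$ vanishes everywhere else in the interval: the merge posterior is the interval's overall conditional mean (utility $0$), and the right-hand conditional mean produced by your split is generically strictly above $\mu_2$ (utility $0$), so both candidates earn $0$ while the bi-pool earns $p_2>0$; no constant-factor guarantee holds. The paper's fix is simpler than the ``careful mass-weighted comparison'' you anticipate: always split, but choose \emph{which} posterior mean to preserve by comparing $p_1\interimU(\mu_1)$ with $p_2\interimU(\mu_2)$ --- if the first is larger, cut at the point where the left conditional mean equals $\mu_1$; otherwise cut where the right conditional mean equals $\mu_2$. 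The unpreserved side's conditional mean then moves weakly outward (beyond $\mu_2$, resp.\ below $\mu_1$, by Lemma~1 of \citet{arieli_optimal_2023}), so the mean identity $p_1\mu_1+p_2\mu_2=\bar\mu$ forces the preserved side's conditional probability to weakly increase, and the partitional policy collects at least $\max\left(p_1\interimU(\mu_1),\,p_2\interimU(\mu_2)\right)\ge\tfrac12\left(p_1\interimU(\mu_1)+p_2\interimU(\mu_2)\right)$ on every bi-pooled interval; no merge is needed. A secondary remark on the structural step: your plan to write the optimal $\numSignals$-signal MPC as a convex combination of bi-pooling MPCs with no more signals and pass to the best component is a genuinely different route from the paper, which instead shows via Krein--Milman that an optimal solution sits at an extreme point of the non-convex set $\mpc{\stateCDF, \numSignals}$ and then proves those extreme points are bi-pooling without full revelation; your route can work but requires a Choquet-type representation argument plus the observation that every component inherits a support of size at most $\numSignals$, neither of which is spelled out.

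The upper bound is also not secured by your sketch. With a uniform prior and utility spikes at just two posterior means, forcing the unrestricted value to be (near) $1$ requires the two-point distribution on $\{\mu_1,\mu_2\}$ with the prior mean to be a mean-preserving contraction of the uniform distribution, and this second-order stochastic dominance constraint, combined with the mean constraint, pins the spikes in positions where some monotone partition recovers substantially more than half of the optimum (the $\tfrac23$ of Example~\ref{example:bi-pooling} is indicative; a short calculation shows two spikes under a uniform prior cannot push the ratio to $\tfrac12$). The paper's tight instance uses a qualitatively different prior: mass $p$, $1-2p$, $p$ on $\{0,\tfrac12,1\}$ with $p\to\tfrac12$, and spikes at $\mu_1=\frac{1-2p}{2(1-p)}$ and $\mu_2=\frac{1}{2(1-p)}$. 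The unrestricted scheme splits the mass at $0$ and $\tfrac12$ across the two signals and achieves utility $1$, whereas any monotone partition can only form the two good posteriors by preserving the prior ratio between adjacent states, so the scarce middle mass $1-2p$ limits the total useful probability to $1-p=\tfrac12+\eps$. This ``vanishing glue between two heavy endpoints'' structure is the essential ingredient missing from your construction; fine-tuning spike locations under a uniform prior will not reach the bound.
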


This $1/2$ PoE result will be a corollary of our result for the high-dimensional case in \Cref{sec:high-dimension}, so we only provide a proof sketch here. As we mentioned in \Cref{def:bi-pooling}, an optimal signaling scheme in single-dimensional linear information design is bi-pooling: it divides the state space $[0, 1]$ into intervals $\{[b_{i-1}, b_i]\}_i$ and sends one or two signals for each interval. 
We further divide each bi-pooling interval $[b_{i-1}, b_i]$ into $[b_{i-1}, c_i] \cup [c_i, b_i]$ to construct a partitional policy that sends only one signal for either sub-interval; one of these two signals induces the same posterior mean as the better one of the two bi-pooling signals. Such a conversion does not change the number of signals used while preserves half of the expected utility of the optimal policy, thus achieving $1/2$ approximation.  





\subsection{Better PoE under Uniform Prior}
\label{sec:uniform}
We then consider an interesting special case where the prior $F$ is the uniform distribution over $[0, 1]$. We show that, with uniform prior, the price of explainability of $K$-partitional information policies can be improved from $1/2$ to $2/3$, and the $2/3$ ratio is tight.  We first present a proposition for binary-valued utility functions (the proof is in Appendix \ref{proof:uniform-binary-2/3}): 

\begin{proposition}\label{prop:uniform-binary-2/3}
	Let $\mathcal{C}_{\mathrm{unif, \,bin}}$ be the class of information design instances $\mathcal{I} = (F, u)$ with $F = \mathrm{Uniform}[0, 1]$ and binary-valued $u$ (i.e., $u(x)\in\{0, 1\}$). Then, for any $K \geq 4$, $\PoE(\mathcal{C}_{\mathrm{unif, \,bin}}\,, K) = \frac23$. 
\end{proposition}
For general utility functions, we can achieve $2/3$ approximation by increasing the number of signals of the policy.
In particular, for $K' \ge 3K/2$, the price of explainability of $K'$-partitional policies against unrestricted $K$-signaling scheme is exactly $2/3$.  

\begin{theorem} \label{thm:uniform-2/3}
    Let $\mathcal C_{\mathrm{unif}}$ be the class of instances $\instance = (F, u)$ with $F = \mathrm{Uniform}[0, 1]$ and any $u$. 
    \begin{itemize}
        \item For any $\instance \in \mathcal C_{\mathrm{unif}}$, $K \ge 2$, $K' \ge 3K/2$, we have $\myoptPart[\instance]{K'} \ge \frac{2}{3} \myopt[\instance]{K}$. 
        \item There exists $\instance \in \mathcal C_{\mathrm{unif}}$ such that, for $K \ge 2$, $K' \ge 3$, $\myoptPart[\instance]{K'} \le \frac{2}{3} \myopt[\instance]{K}$. 
    \end{itemize}
    Therefore, for $K\ge 2, K'\ge 3K/2$, $\myPoE{\mathcal C_{\mathrm{unif}}, K, K'} := \inf_{\instance \in \mathcal C_{\mathrm{unif}}}  \frac{\myoptPart[\instance]{K'}}{\myopt[\instance]{K}} = \frac{2}{3}$. \jamie{This is undefined.}
\end{theorem}

As we will show below, the increase in the number of signals is because the conversion process from optimal (bi-pooling) signaling scheme to partitional information policy increases the number of signals in each bi-pooling interval from $2$ to $3$.  Whether one can 
obtain $2/3$ approximation without increasing the number of signals is an open question.

\begin{proof}[Proof of \Cref{thm:uniform-2/3}]
The second part of the theorem, about the existence of an instance $\instance \in \mathcal C_{\mathrm{uniform}}$ such that $\myoptPart[\instance]{K'} \le \frac{2}{3} \myopt[\instance]{K}$, follows from \Cref{ex:continued}. 

To prove the first part of the theorem, we take as given any optimal unrestricted $K$-signaling scheme $\pi^*$, which is bi-pooling (\Cref{def:bi-pooling}), and convert it into a $K'$-partitional policy $\pi_A$.
Specifically, we consider every bi-pooling interval $[b_{i-1}, b_i]$ of $\pi^*$ and further divide $[b_{i-1}, b_i]$ into three sub-intervals to obtain a 3-partitional policy with at least $2/3$ of the expected utility of $\pi^*$ on interval $[b_{i-1}, b_i]$; this is formalized in the following \Cref{lem:uniform-prior-2/3}. 
Performing this construction for every bi-pooling interval of $\pi^*$ gives a partitional policy $\pi_A$ on the entire state space $[0, 1]$ with utility $U(\pi_A) \ge \frac{2}{3} U(\pi^*) = \tfrac{2}{3} \myopt[\instance]{K}$, using at most $K' = 3K/2$ signals, which proves the theorem. 

\begin{lemma}\label{lem:uniform-prior-2/3}
Let $[b_{i-1}, b_i]$ be an interval where the optimal policy $\pi^*$ sends two signals.  Under uniform prior, there exists a $(K=3)$-partitional policy $\pi_{K=3}$ on interval $[b_{i-1}, b_i]$ such that the conditional expected utility satisfies $U_{[b_{i-1}, b_i]}(\pi_{K=3}) \ge \frac{2}{3} U_{[b_{i-1}, b_i]}(\pi^*)$.  
\end{lemma}

We prove \Cref{lem:uniform-prior-2/3} by constructing such a $(K=3)$-partitional policy. Without loss of generality, we normalize the interval $[b_{i-1}, b_i]$ to $[0, 1]$. Under uniform prior, the mean of the interval is $1/2$.  Suppose the posterior means associated with the two signals of $\pi^*$ are $\posteriorMean_1 \in [0, 1/2), \posteriorMean_2 \in (1/2, 1]$, with interim utility $u_1 = \interimU(\mu_1), u_2 = \interimU(\mu_2)$.  By symmetry, we assume without loss of generality that $u_1 \ge u_2 \ge 0$.  The expected utility of $\pi^*$ is 
\begin{equation} \label{eq:U(pi*)}
    U(\pi^*) = p_1 u_1 + p_2 u_2
\end{equation}
where $p_1, p_2$ are the total probabilities that the two signals are sent, 
satisfying the Bayes-plausibility condition $p_1 \mu_1 + p_2 \mu_2 = 1/2$ and $p_1 + p_2 = 1$. 

Let's construct a partitional policy in a greedy way. Starting from the left endpoint 0, take the largest interval $I^{\max}_1$ whose conditional mean is $\mu_1$: by the uniform prior assumption, the interval is $I^{\max}_1 = [0, 2\mu_1] \subseteq [0, 1]$. If $2\mu_1 < \mu_2$, then we take another interval (starting from $2\mu_1$) whose conditional mean is $\mu_2$, which is $I_2 = [2\mu_1, 2\mu_2 - 2\mu_1]$.  If $2\mu_1 \ge \mu_2$, then we use $I^{\max}_1$ only.
Combining with the remaining interval (to the right endpoint 1), we obtain the following partitional information policy $\pi_1$ with at most 3 parts: 
\begin{align}
    \pi_1 = \begin{cases}
        I^{\max}_1 \cup I_2 \cup [2\mu_2-2\mu_1, 1] & \text{ if $2\mu_1 < \mu_2$} \\
        I^{\max}_1 \cup [2\mu_1, 1] & \text{ if $2\mu_1 \ge \mu_2$} 
    \end{cases}\,. 
\end{align}
Construct another partitional policy $\pi_2$ by taking the maximal interval starting from the right endpoint whose conditional mean is $\mu_2$, $I^{\max}_2 = [2\mu_2-1, 1]$, combined with the remaining interval to the left, $[0, 2\mu_2-1]$:  
\begin{align}
    \pi_2 = [0, 2\mu_2-1] \cup I^{\max}_2. 
\end{align}
We claim that the better policy among $\pi_1, \pi_2$ provides the desired 2/3-approximation: 
\begin{lemma} \label{lemma:uniform-prior-better-among-2}
$\max \big\{\, U(\pi_1),\, U(\pi_2) \,\big\} \ge \tfrac{2}{3} U(\pi^*)$. 
\end{lemma}
To prove \Cref{lemma:uniform-prior-better-among-2}, we prove the following min-max inequality  
\[ \min_{0\le \mu_1 \le 1/2 \le \mu_2 \le 1, ~ 0\le u_2 \le u_1} \max \big\{\, U(\pi_1) - \tfrac{2}{3} U(\pi^*),\, U(\pi_2) - \tfrac{2}{3} U(\pi^*)\,\big\} ~ \ge ~ 0\, \]
by a case analysis.  
See details in Appendix \ref{proof:lemma:uniform-prior-better-among-2}. 
\end{proof}

\section{Price of Explainability in Higher Dimensions}
\label{sec:high-dimension}
This section generalizes our single-dimensional PoE results to higher dimensions, where the state space is now $[0, 1]^m \subseteq \reals^m$. A key question in the high-dimensional case is how to define explainability. 
\Cref{subsec:high-dimension-convex} considers \emph{convex-partitional} information policies, showing that their price of explainability is exactly $1/(m+1)$, which generalizes our single-dimensional $1/2$ PoE result. \Cref{subsec:high-dimension-rectangular} explores another definition, \emph{axis-aligned rectangular} policies. 

\subsection{Price of Explainability of Convex Partitional Policies} \label{subsec:high-dimension-convex}
This subsection considers \emph{convex-partitional} policies. 
A $K$-convex-partitional information policy divides the state space into $K$ convex subsets $[0, 1]^m = P_1 \cup \cdots \cup P_K$ with non-overlapping interiors and 
reveals which subset contains the state.\footnote{What signals to send at the boundary of the subsets do not matter under a continuous prior.}
It generalizes the $K$-interval-partitional policy we studied so far for the single-dimensional case. We show that the price of explainability of $K$-convex-partitional information policies is exactly $1/(m+1)$, which generalizes our $1/2$ PoE result in \Cref{thm:K-partitional-1/2}. 
\begin{theorem}
\label{thm:convex-1/(m+1)}
Assume continuous prior $F$ and arbitrary utility function $u$. 
For any $K\ge m+1$, the price of explainability of $K$-convex-partitional information policies is \[\PoE^{\cc{convex}}(\mathcal C_{\mathrm{all}}, K) := \inf_{\instance \in \mathcal C_{\mathrm{all}}} \frac{\OPT^{\cc{convex}}_{\instance}(K)}{\OPT_{\instance}(K)} = \frac{1}{m+1}.\] 
\end{theorem}


We prove \Cref{thm:convex-1/(m+1)} by proving the lower and upper bound parts separately. 

\subsubsection{Proof of \Cref{thm:convex-1/(m+1)}: Lower Bound Part}
\label{sec:convex-1/(m+1)-lower-bound}
This subsection proves ``$\PoE^{\cc{convex}}(\mathcal C_{\mathrm{all}}, K) \ge \frac{1}{m+1}$''. 
Our proof strategy is to convert an optimal $\numSignals$-signaling scheme $\signalProb^*$ -- a solution to the problem $\myopt[\instance]{\numSignals}= \max\nolimits_{(\signalspace, \signalProb): |\signalspace| \le\numSignals} \designerExpU(\signalProb)$ -- into a $\numSignals$-convex-partitional information policy $\signalProb^\cc{convex}$ that satisfies $\designerExpU(\signalProb^\cc{convex}) \ge \frac{1}{m+1} \designerExpU(\signalProb^*)$.  
In particular, our proof proceeds as follows:
\begin{itemize}
    \item 
    We first characterize the optimal solution $\signalProb^*$ to 
    $\myopt[\instance]{\numSignals}$. 
    Because the objective function $\designerExpU(\signalProb)$ is non-convex, 
    we reformulate the problem by representing an information policy by the induced distribution over posterior means, known as a \emph{mean-preserving contraction (MPC)} of the prior $F$. 
    This reformulation gives a linear objective function.
    But the feasible region $\mpc{\stateCDF, \numSignals}$, which is the set of all MPCs with support size at most $\numSignals$, is not a convex set. 
    We prove that, although the feasible region is non-convex, there still exists an optimal solution at an extreme point of the feasible region.
    \item 
    We then characterize the extreme points of the non-convex feasible region $\mpc{\stateCDF, \numSignals}$.
    Previous work \citep{kleiner_extreme_2025} shows that the extreme points of the set of MPCs with infinite support size, $\mpc{\stateCDF, \infty}$, have a ``$(m+1)$-pooling'' property.
    We generalize this result to MPCs with finite support size $K < \infty$. 
    \item 
    The above two steps together show that there must exist an optimal $\numSignals$-signaling scheme that is $(m+1)$-pooling. 
    Exploiting this $(m+1)$-pooling property, we convert the optimal $\numSignals$-signaling scheme into a $\numSignals$-convex-partitional signaling scheme with a $1/(m+1)$ approximation guarantee.
\end{itemize}
We then formalize the above three steps.

\xhdr{\bf Reformulation via mean-preserving contractions.}
To begin, we reformulate the information designer's optimization problem $\myopt[\instance]{\numSignals} = \max\nolimits_{(\signalspace, \signalProb): |\signalspace| \le\numSignals} \designerExpU(\signalProb)$ by representing any information policy as a \emph{mean-preserving contraction}.
Given an information policy $\signalProb$, 
let $G_\signalProb \in \Delta([0, 1]^m)$ be the distribution over posterior means induced by $\signalProb$: namely, $G_\signalProb$ is the distribution supported on $\{ \posteriorMean_{\signalProb, \signal} \mid \signal \in \signalspace\} \subseteq [0, 1]^m$ where the probability mass of $\posteriorMean_{\signalProb, \signal}$ is equal to the unconditional probability of signal $s$, $\int \signalProb(\signal | \state)\; \dd F(\theta)$. 
Let 
\begin{equation*}
	\mpc{\stateCDF, \numSignals} = \Big\{ G \in \Delta([0, 1]^m) \,\Big|~  \text{there exists a $\numSignals$-signaling scheme $\pi$ such that $G_\pi = G$} \Big\}
\end{equation*}
be the set of MPCs induced by $\numSignals$-signaling schemes from prior $\stateCDF$. Note that each $G \in \mpc{\stateCDF, \numSignals}$ is a distribution supported on at most $\numSignals$ points in $[0, 1]^m$, and $\mpc{\stateCDF, \numSignals}$ is not a convex set because the convex combination of two distributions with different supports of size $K$ may have a support of size larger than $K$. 

After representing information policies as MPCs, we can rewrite the information designer's optimization problem over $\numSignals$-signaling schemes as optimization over $\mpc{\stateCDF, \numSignals}$: 
\begin{equation} \label{eq:OPT-MPC-K}
     \myopt[\instance]{\numSignals} ~ = ~ \max_{G \in \mpc{\stateCDF, \numSignals}}\designerExpU(G) ~ = ~ \max_{G \in \mpc{\stateCDF, \numSignals}} \expect[\posteriorMean\sim G]{\interimU(\posteriorMean)} \, . 
\end{equation}
Importantly, the objective function $\designerExpU(G) = \expect[\posteriorMean\sim G]{\interimU(\posteriorMean)} = \int \interimU(\posteriorMean) \, \dd G(\posteriorMean)$ is now linear in $G \in \mpc{\stateCDF, \numSignals}$. 
Although the feasible region $\mpc{\stateCDF, \numSignals}$ is not convex, the linearity of $\designerExpU(G)$ still guarantees the existence of an optimal solution $G^*$ that is an \emph{extreme point} of $\mpc{\stateCDF, \numSignals}$.  A point $x$ in a non-convex set $M$ is said to be an extreme point of $M$ if $x$ cannot be written as the convex combination $\lambda y + (1-\lambda) z$ of two other points $y, z \in M \setminus \{x\}$, with $\lambda \in (0, 1)$.  

\begin{lemma} \label{lem:extreme-point-solution}
Problem \eqref{eq:OPT-MPC-K} has a solution $G^*$ that is an extreme point of $\mpc{\stateCDF, \numSignals}$. 
\end{lemma}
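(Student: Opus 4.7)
The plan is to start from an arbitrary optimum $G^*$ of~\eqref{eq:OPT-MPC-K} and repair it into an extreme point of $\mpc{\stateCDF,\numSignals}$ without sacrificing objective value. The main obstacle is exactly what the excerpt flags: $\mpc{\stateCDF,\numSignals}$ is not convex, since averaging two distributions with support size $\numSignals$ can yield one of support size up to $2\numSignals$, so Bauer's maximum principle does not apply directly. The workaround is to restrict attention to a well-chosen convex compact subset on which the principle does apply, and then to lift the resulting extreme point back to the full feasible region.

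First I would record that an optimal $G^*$ exists. Parametrizing any $G\in\mpc{\stateCDF,\numSignals}$ by atoms $(x_i)\in[0,1]^\numSignals$ and weights $(p_i)\in\Delta^{\numSignals-1}$ exhibits $\mpc{\stateCDF,\numSignals}$ as the continuous image of a compact set: the MPC conditions are closed under weak-$*$ limits, and a weak-$*$ limit of distributions with at most $\numSignals$ atoms has at most $\numSignals$ atoms. Upper semi-continuity of $\interimU$ makes $G\mapsto \int \interimU\,\dd G$ upper semi-continuous (Portmanteau), so a maximizer $G^*$ exists. Let $S^* = \supp(G^*) = \{x_1^*,\ldots,x_{N^*}^*\}$ with $N^*\le \numSignals$.

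Next I would restrict to the set
\[
M_0 \;=\; \bigl\{\, G \in \mpc{\stateCDF,\numSignals} \;:\; \supp(G) \subseteq S^* \,\bigr\}.
\]
$M_0$ is convex because any convex combination of distributions supported on $S^*$ is still supported on $S^*$ (so the support-size constraint is automatic) and the MPC conditions are linear in $G$; it is compact because it embeds into the finite-dimensional simplex on $S^*$. On $M_0$ the objective reduces to the linear function $G\mapsto \sum_{i=1}^{N^*} G(\{x_i^*\})\,\interimU(x_i^*)$, so by Bauer's maximum principle its maximum on $M_0$ is attained at some extreme point $G^\star$ of $M_0$.

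Finally I would verify that $G^\star$ both solves~\eqref{eq:OPT-MPC-K} and is extreme in the full set. Optimality is immediate: $G^*\in M_0$, so $\designerExpU(G^\star)\ge \designerExpU(G^*)=\myopt[\instance]{\numSignals}$, with equality forced since $G^\star\in \mpc{\stateCDF,\numSignals}$. For extremality, suppose $G^\star = \lambda G_1 + (1-\lambda)G_2$ with $G_1,G_2\in\mpc{\stateCDF,\numSignals}$ and $\lambda\in(0,1)$. Then $\supp(G_i)\subseteq \supp(G^\star)\subseteq S^*$, so $G_1,G_2\in M_0$; extremality of $G^\star$ inside $M_0$ then forces $G_1=G_2=G^\star$, completing the argument.
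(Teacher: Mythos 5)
Your proposal is correct, but it reaches the conclusion by a genuinely different route than the paper. The paper stays in the infinite-dimensional space of measures: it takes the set $M^*$ of all optimal solutions, notes it is compact (though possibly non-convex), invokes the Krein--Milman theorem to extract an extreme point $G^*$ of $M^*$, and then uses linearity of $\designerExpU$ to argue that $G^*$ must in fact be extreme in all of $\mpc{\stateCDF, \numSignals}$ (a nontrivial convex combination involving a non-optimal point would strictly decrease the objective). You instead exploit the finite-support structure directly: fixing one optimizer $G^*$, you restrict to the finite-dimensional convex compact set $M_0$ of MPCs supported on $\supp(G^*)$, apply Bauer's maximum principle there, and lift extremality back to $\mpc{\stateCDF, \numSignals}$ via the observation that in a convex combination of probability measures each component's support is contained in the support of the mixture -- which is the step that makes the support-size constraint harmless. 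Your route is more elementary (no Krein--Milman for non-convex compact sets in a space of measures, only finite-dimensional convex geometry) and is also slightly more careful on the analytic side, since you use upper semi-continuity of $\interimU$ plus the Portmanteau theorem for existence, whereas the paper asserts continuity of $\designerExpU(G)$, which under mere upper semi-continuity of $\interimU$ should really be upper semi-continuity (its argument still goes through, as superlevel sets of u.s.c.\ functions are closed). What the paper's argument buys in exchange is abstraction: it never needs to look at supports at all and would apply verbatim to variants of the feasible set where the non-convexity is not of the ``finite support'' form.
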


The above lemma would be obvious if $\mpc{\stateCDF, \numSignals}$ were a non-convex subset of a finite-dimensional vector space. However, $\mpc{\stateCDF, \numSignals}$ is infinite-dimensional, so the lemma is not immediate. We use the Krein-Milman Theorem for compact but non-convex sets to prove this lemma. See Appendix \ref{proof:extreme-point-solution} for details.

\xhdr{\bf Characterizing the extreme points of MPCs.}
We then characterize the extreme points of $\mpc{\stateCDF, \numSignals}$.
Based on a proof by \cite{kleiner_extreme_2025}, we show that the extreme points of $\mpc{\stateCDF, \numSignals}$ satisfy an \emph{$(m+1)$-pooling} property, which generalizes the bi-pooling property in the single-dimensional case (\Cref{def:bi-pooling}). 

\begin{definition}[$(m+1)$-pooling]
An information policy $\signalProb$ (with a finite signal space $\signalspace$) is an \emph{$(m+1)$-pooling policy} if it divides the state space $[0, 1]^m$ into finitely many convex subsets $[0, 1]^m = \bigcup_{i\in I} P_i$ and, for the states in each subset $P_i$, sends at most $m+1$ signals, inducing $(m+1)$ posterior means $\mu_{i, 1}, \ldots, \mu_{i, m+1}$ inside $P_i$. 

The MPC $G_\signalProb$ induced by an $(m+1)$-pooling information policy $\signalProb$ is called an \emph{$(m+1)$-pooling MPC}. 
\end{definition}

\citet{kleiner_extreme_2025} considered the set of MPCs induced by all signaling schemes without signal space size constraint, namely, $\mpc{\stateCDF, \numSignals = \infty}$. Unlike $\mpc{\stateCDF, K<\infty}$, $\mpc{\stateCDF, \infty}$ is a convex set.
\citet{kleiner_extreme_2025}'s Proposition 1 showed that any extreme point of $\mpc{\stateCDF, \infty}$ with a finite support must be an $(m+1)$-pooling MPC.  In particular, 
for any finite-support extreme point $G$ of $\mpc{\stateCDF, \infty}$, the support of $G$ within each convex subset $P_i$ must consist of \emph{affinely independent} vectors.\footnote{$N$ vectors $v_1, \ldots, v_N \in \reals^m$ are \emph{affinely independent} if $v_2 - v_1, \ldots, v_N - v_1$ are linearly independent; or equivalently, the only solution to $\sum_{i=1}^N \alpha_i v_N = 0$ and $\sum_{i=1}^N \alpha_i = 0$ is $\alpha_1 = \cdots = \alpha_N = 0$.}
Because the number of affinely independent vectors in $m$-dimensional space is at most $m+1$, the support size of $G$ within each convex subset $P_i$ cannot exceed $m+1$.
We note that \citet{kleiner_extreme_2025}'s proof for  $\mpc{\stateCDF, \infty}$ also works for $\mpc{\stateCDF, \numSignals < \infty}$, so we obtain the following characterization for the extreme points of $\mpc{\stateCDF, \numSignals}$: 
\begin{lemma}[Adaptation of \citealp{kleiner_extreme_2025}'s Proposition 1]\label{lem:(m+1)-pooling}
Any extreme point $G$ of $\mpc{\stateCDF, \numSignals}$ is an $(m+1)$-pooling MPC. 
\end{lemma}
\begin{proof}
We mention why \citet{kleiner_extreme_2025}'s proof for $K=\infty$ also works for $K<\infty$. They prove that any $G \in \MPC(F, \infty)$ that has a non-affinely-independent support in some convex subset $P_i$ is equal to the convex combination of two other $G_1, G_2 \in \MPC(F, \infty)$, so $G$ cannot be an extreme point of $\MPC(F, \infty)$.  
In their proof, $G_1$ and $G_2$ have the same support as $G$. So, if the support size of $G$ is at most $K$, i.e., $G\in \MPC(F, K)$, then the support sizes of $G_1$ and $G_2$ are also at most $K$, therefore $G_1, G_2 \in \MPC(F, K)$. So $G$ cannot be an extreme point of $\MPC(F, K)$. 
\end{proof}

As a corollary of Lemmas \ref{lem:extreme-point-solution} and \ref{lem:(m+1)-pooling}, 
there exists an optimal $\numSignals$-signaling scheme $\pi^*$ that is $(m+1)$-pooling.   
\begin{corollary} \label{cor:optimal-(m+1)-pooling}
The optimization problem $\myopt[\instance]{\numSignals} = \max\nolimits_{(\signalspace, \signalProb): |\signalspace| \le\numSignals} \designerExpU(\signalProb)$ has an $(m+1)$-pooling solution $\signalProb^*$.  
\end{corollary} 

\xhdr{\bf Converting optimal policy to convex-partitional policy.}
\Cref{cor:optimal-(m+1)-pooling} shows that the information designer's optimization problem $\myopt[\instance]{\numSignals}$ has an $(m+1)$-pooling $K$-signaling scheme solution $\signalProb^*$.  We then convert $\signalProb^*$ into a $\numSignals$-convex-partitional policy $\signalProb^\cc{convex}$ that satisfies $\designerExpU(\signalProb^\cc{convex}) \ge \frac{1}{m+1} \designerExpU(\signalProb^*)$.  

Let $[0, 1]^m = \bigcup_{i\in I} P_i$ be the convex partition structure of $\pi^*$, where $\pi^*$ sends at most $m+1$ signals for the states in each convex part $P_i$.  Consider each $P_i$.  If $\pi^*$ sends only one signal for all states in $P_i$, then we let $\cvxsignaling$ do the same. Otherwise, $\pi^*$ sends at least 2 and at most $m+1$ signals, denoted by $\{s_0, s_1, \ldots, s_m\}$, for the states in $P_i$.
Let $\mu_k = \frac{\int_{\theta \in P_i}\theta \cdot  \pi^*(s_k|\theta)\; \dd \stateCDF(\theta)}{\int_{\theta \in P_i} \pi^*(s_k | \theta)\; \dd \stateCDF(\theta)} \in P_i$
be the posterior mean induced by signal $s_k$. 
Without loss of generality, assume that $s_0$ is the signal with the highest expected utility, which means 
\begin{equation} \label{eq:s0-largest}
    \interimU(\mu_0) 
    \underbrace{\int_{\theta \in P_i} \pi^*(s_0|\theta)\; \dd \stateCDF(\theta)}_{\text{probability that $s_0$ is sent}} ~ 
    \ge ~ \interimU(\mu_k) \underbrace{\int_{\theta \in P_i}  \pi^*(s_k|\theta) \; 
    \dd \stateCDF(\theta)}_{\text{probability that $s_k$ is sent}}
    \qquad \forall k = 0, 1, ..., m. 
\end{equation}
Let 
\begin{equation*}
    \mathcal H_i ~ 
    = ~  \left \{ (w, c) \in \reals^{m+1} ~\Big|~ \tfrac{\int_{\theta \in P_i, w^\top \theta \le c} \theta\; \dd F(\theta)} {\int_{\theta \in P_i, w^\top \theta \le c}\; \dd F(\theta)} = \mu_0 \right\}
\end{equation*}
be the set of halfspaces, parameterized by $(w, c)$, whose intersection with $P_i$ has a center of mass (weighted by $F$) equal to $\mu_0$. 
Among all such halfspaces, let $H_i^*$ be the one whose intersection with $P_i$ is the largest (under measure $F$): 
\begin{equation*}
    H^*_i ~ := ~ \argmax_{(w, c) \in \mathcal H_i} \int_{\theta \in P_i, w^\top \theta \le c} \dd F(\theta) ~ = ~ \argmax_{\text{halfspace $H \in \mathcal H_i$}} F(P_i \cap H)\,. 
\end{equation*}
Let $\cvxsignaling$ send one signal for all states in $P_i \cap H^*_i$, another signal for all states in $P_i \setminus H^*_i$.  We do this for every convex part $P_i$ where $\pi^*$ sends at least 2 signals.
The resulting $\cvxsignaling$ is a convex-partitional information policy because $P_i \cap H^*_i$ and $P_i \setminus H^*_i$ are intersections of a convex set with halfspaces and hence convex. The total number of signals used by $\cvxsignaling$ is no more than the number of signals used by $\pi^*$, which is at most $K$. 

To prove that $\cvxsignaling$ achieves at least $\frac{1}{m+1}$ fraction of the expected utility of $\pi^*$, we prove it for each part $P_i$ separately. The expected utility of $\pi^*$ on $P_i$ is
\begin{equation*}
    U|_{P_i}(\pi^*) ~ = ~ \sum_{k=0}^m u(\mu_k) \int_{\theta \in P_i} \pi^*(s_k |\theta)\; \dd F(\theta). 
\end{equation*}
By definition, the conditional mean of the states in $P_i \cap H_i^*$ is $\mu_0$. And because the utility obtained from the signal for $P_i \setminus H_i^*$ is non-negative, the expected utility of $\cvxsignaling$ on $P_i$ is at least 
\begin{equation*}
    U|_{P_i}(\cvxsignaling) ~ \ge ~ u(\mu_0) \int_{\theta \in P_i \cap H_i^*} \dd F(\theta) ~ + ~ 0.  
\end{equation*}
We then show that the probability that posterior mean $\mu_0$ is induced under $\pi^\cc{convex}$, $\int_{\theta \in P_i \cap H_i^*} \dd F(\theta)$, is weakly larger than the probability under $\pi^*$, $\int_{\theta \in P_i} \pi^*(s_0 | \theta)\; \dd F(\theta)$: 
\begin{lemma}
\label{lemma:compare-two-probability}
$\int_{\theta \in P_i \cap H_i^*} \dd F(\theta) \ge \int_{\theta \in P_i} \pi^*(s_0 | \theta)\; \dd F(\theta)$. 
\end{lemma}
\begin{proof}[Proof sketch]
Consider the problem of maximizing the total probability of sending signal $s_0$, $\int_{\theta \in P_i} \pi(s_0 | \theta)\; \dd F(\theta)$, subject to the posterior mean constraint $\tfrac{\int_{\bm \theta \in P_i}\theta \cdot \pi(s_0|\theta)\; \dd \stateCDF(\theta)}{\int_{\theta \in P_i} \pi(s_0 | \theta)\; \dd \stateCDF(\theta)} = \mu_0$.
This is equivalent to finding a maximal-weight subset $R$ in region $P_i$ while ensuring that the $F$-weighted center of mass of $R$ equals $\mu_0$. This is an infinite-dimensional linear program. Via Lagrange duality analysis, the optimal subset $R^* \subseteq P_i$ should be the intersection of a halfspace (which is $H^*_i$) with $P_i$. 
See Appendix \ref{proof:compare-two-probability} for details.  
\end{proof}

Using \Cref{lemma:compare-two-probability}, we obtain
\begin{align*}
    U|_{P_i}(\cvxsignaling) & ~ \ge ~ u(\mu_0) \int_{\theta \in P_i} \pi^*(s_0 | \theta)\; \dd F(\theta) \\
    \text{by \eqref{eq:s0-largest}} & ~ \ge ~ \frac{1}{m+1} \sum_{k=0}^m u(\mu_k) \int_{\theta \in P_i} \pi^*(s_k | \theta)\; \dd F(\theta) ~ = ~ \frac{1}{m+1} U|_{P_i}(\pi^*). 
\end{align*}
Summing over $P_i$ gives $U(\cvxsignaling) \ge \frac{1}{m+1} U(\pi^*)$, therefore $\PoE^{\cc{convex}}(\mathcal C_{\mathrm{all}}, K) \ge \frac{1}{m+1}$. 

\subsubsection{Proof of \Cref{thm:convex-1/(m+1)}: Upper Bound Part} \label{sec:general-proof-upper-bound}
We then prove ``$\PoE^{\cc{convex}}(\mathcal C_{\mathrm{all}}, K) \le \frac{1}{m+1}$'' by constructing, for any $\eps > 0$, an instance $\instance = (\stateCDF, \interimU)$ 
such that $\PoE^{\cc{convex}}(\instance, K) \le \frac{1}{m+1} + O(\eps)$ for any $K \ge m+1$.  

Let the prior $\stateCDF$ be a discrete distribution over $m+2$ points $x_0, x_1, \ldots, x_{m+1}$, where $x_1, \ldots, x_{m+1}$ are any $m+1$ affinely independent points in $[0, 1]^m$ (e.g., the $m$ standard basis vectors plus the origin), and $x_0$ is their unweighted average $\tfrac{1}{m+1} \sum\nolimits_{i=1}^{m+1} x_i$.
The prior probability mass is $f(x_i) = \tfrac{1}{m+1} - \eps$ for each $x_i$, $1 \le i\le m+1$, while the prior mass for $x_0$ is small: $f(x_0) = (m+1)\eps = O(\eps)$.
We consider such a discrete prior for simplicity of presentation; one can also construct a similar (but more complicated) instance with a continuous prior. 
The interim utility $\interimU$ is 1 only at the prior-weighted mean of $x_i$ and $x_0$, denoted by $\mu_i = \frac{f(x_i) x_i + f(x_0) x_0}{f(x_i) + f(x_0)} = (1 - O(\eps)) x_i + O(\eps) x_0 \approx x_i$, for $1\le i \le m+1$.  
The interim utility is $0$ everywhere else. 

To obtain positive utility, a signal has to induce a posterior mean at one of the $\mu_i$. In fact, the optimal $(K=m+1)$-signaling scheme can induce $m+1$ posterior means all at $\mu_i$, $1\le i \le m+1$, and hence achieve an expected utility of 1. 

\input{plots/fig3}

\begin{lemma} \label{lem:m-optimal-1}
In the above instance, the optimal $(K=m+1)$-signaling scheme $\pi^*$ achieves expected utility $U(\pi^*) = 1$. 
\end{lemma}
\begin{proof}[Proof sketch]
Consider the signaling scheme $\pi^*$ that, at each state $x_i$ for $1 \le i \le m+1$, sends signal $s_i$ with a large probability $q \approx 1$ and signal $s_j$ with a small probability $\frac{1-q}{m} \approx 0$ for every $j \in [m+1]\setminus \{i\}$; and at state $x_0$, sends a signal among $s_1, \ldots, s_{m+1}$ uniformly at random. By choosing $q$ appropriately, the posterior mean induced by each signal $s_i$ will be equal to $\mu_i$, hence obtaining utility $1$.  See \Cref{fig:m-optimal} for an illustration of the optimal signaling scheme, and see Appendix \ref{app:lem:m-optimal-1} for the detailed proof. 
\end{proof}

We then prove that the expected utility of any convex-partitional information policy $\cvxsignaling$ is at most $\frac{1}{m+1} + m\eps$:
\begin{lemma} \label{lem:m-convex-1/m+1}
For any convex-partitional 
$\cvxsignaling$, $U(\cvxsignaling) \le \frac{1}{m+1} + m\eps$. 
\end{lemma}
\begin{proof}[Proof sketch]
See \Cref{fig:m-convex} for an illustration of a convex-partitional information policy. In order to induce a posterior mean at $\mu_i$, the policy has to send signal $s_i$ with positive probability at states $x_i$ and $x_0$, but not at any state $x_j$ with $j\notin\{0, i\}$.
As a result, the total probability that the policy sends signals inducing the desired posterior means is at most $\sum_{i=1}^{m+1} \frac{f(x_i)}{m+1} + f(x_0) \approx \frac{1}{m+1} + O(\eps)$. See Appendix \ref{app:lem:m-convex-1/m+1} for a detailed proof.  
\end{proof}

Lemmas \ref{lem:m-optimal-1} and \ref{lem:m-convex-1/m+1} imply that the ratio between the expected utilities of any $\pi^\cc{convex}$ and $\pi^*$ is $\frac{U(\cvxsignaling)}{U(\pi^*)} \le \frac{1}{m+1} + m\eps$. Letting $\eps \to 0$ proves the theorem.

\subsection{Price of Explainability of Rectangular Information Policies}
\label{subsec:high-dimension-rectangular}
Drawing on a connection with explainable clustering \citep{MDRF-20}, 
another natural definition of explainable information policies in the high-dimensional case is those that partition the state space into axis-aligned rectangles and 
reveal which rectangle contains the state. (In $m$ dimensions, by ``rectangle'' we mean a $m$-dimensional rectangular prism.)
We then study the price of explainability $\PoE^{\cc{rect}}(\instance, \numSignals)$ of such rectangular information policies against the unrestricted information policies, with the same number of signals $K$.  


Unfortunately, unlike convex-partitional policies whose PoE is at least $1/(m+1)$, this guarantee fails
for rectangular policies, 
whose $\PoE^{\cc{rect}}(\instance, \numSignals)$ can be arbitrarily close to \jamie{The proposition says exactly} $0$. 

\begin{proposition}
\label{prop:high-d no poe}
In the $(m=2)$-dimensional case, 
there exists an information design instance $\instance = (\stateCDF, \interimU)$ such that $\PoE^{\cc{rect}}(\instance, \numSignals = 2) = 0$.
\end{proposition}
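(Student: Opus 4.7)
The plan is to construct a two-dimensional instance where an optimal unrestricted $2$-signaling scheme exploits a diagonal split that no axis-aligned rectangular partition can imitate. I take $\stateCDF$ to be the uniform distribution on $[0,1]^2$. The witness unrestricted scheme $\signalProb^*$ separates the square along the diagonal $\theta_1 = \theta_2$: by symmetry, the two induced posterior means are $\posteriorMean^A = (2/3,\, 1/3)$ and $\posteriorMean^B = (1/3,\, 2/3)$.

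Next I would choose the interim utility $\interimU : [0,1]^2 \to [0,1]$ to concentrate on this two-point set, setting $\interimU(\posteriorMean^A) = \interimU(\posteriorMean^B) = 1$ and $\interimU(\posteriorMean) = 0$ everywhere else. This function is upper semi-continuous, as required; if a continuous witness is preferred, replace the indicator with two narrow tent functions centered at $\posteriorMean^A$ and $\posteriorMean^B$, which does not affect the argument. Under this $\interimU$, $\designerExpU(\signalProb^*) = 1$, so $\myopt[\instance]{2} \ge 1$.

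The key step, and what the proposition really hinges on, is to argue that every explainable $2$-partitional policy in this instance attains expected utility $0$. In the high-dimensional definition inherited from explainable clustering, a $2$-partitional policy corresponds to a partition of $[0,1]^2$ into two axis-aligned rectangles, which is obtained by a single axis-aligned cut: either a vertical line $\theta_1 = c$ or a horizontal line $\theta_2 = c$ for some $c \in [0,1]$. A vertical cut produces posterior means both lying on the horizontal line $\{(x, 1/2) : x \in [0,1]\}$, and a horizontal cut produces posterior means on $\{(1/2, y) : y \in [0,1]\}$; neither of these lines meets the two-point support $\{\posteriorMean^A, \posteriorMean^B\}$ of $\interimU$, so the payoff is $0$. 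A degenerate $1$-signal policy induces the prior mean $(1/2, 1/2)$, which is again off the support of $\interimU$. Combining these bounds gives $\myoptPart[\instance]{2} = 0$ and hence $\myPoE{\instance, 2} = 0$.

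I expect the only subtle point to be pinning down the correct notion of an explainable policy in two dimensions. Once one commits to axis-aligned rectangle partitions, as the discussion preceding the proposition does, the geometric observation that a single axis-aligned cut through a uniform square can only yield posteriors on the coordinate line $\theta_1 = 1/2$ or $\theta_2 = 1/2$ follows immediately from symmetry, and the rest of the argument is an immediate calculation; no analogue of the bi-pooling machinery from Section~\ref{sec:proof-PoE-lower-bound} is available to rescue the partitional designer, which is exactly what makes the two-dimensional price of explainability degenerate without a regularity assumption on $\interimU$.
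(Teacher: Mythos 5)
Your proof is correct and follows essentially the same strategy as the paper: exhibit an instance whose utility is supported on two symmetric, off-axis points that an unrestricted $2$-signaling scheme can hit exactly by pooling across a diagonal, while no partition of $[0,1]^2$ into two axis-aligned rectangles (i.e., a single axis-aligned cut, or the degenerate no-cut policy) can induce those posterior means. The only difference is the concrete witness: the paper uses a four-point-mass prior at $(0,0),(0,0.5),(1,0.5),(1,1)$ with spikes at $(0.5,0.25)$ and $(0.5,0.75)$, relegating the continuous-prior version to a footnote and merely asserting that axis-aligned partitions fail, whereas your uniform-prior, diagonal-split instance gives a continuous prior directly and supplies an explicit symmetry argument (any single axis-aligned cut places both posterior means on a coordinate median line $\theta_1=1/2$ or $\theta_2=1/2$), which is a slightly cleaner and more self-contained verification of the same idea.
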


\begin{example}
\label{ex:high-D}
Let $\instance = (\stateCDF, \interimU)$ be an instance with a discrete prior supported on $4$ points:
$\statePDF((0, 0)) = \statePDF((0, 0.5)) = \statePDF((1, 0.5)) = \statePDF((1, 1)) = \frac{1}{4}$.
The interim utility function is 
$\interimU(\state) = \indicator{\state = (0.5, 0.25)} + \indicator{\state = (0.5, 0.75)}$.\footnote{
It is straightforward to generalize this instance to allow continuous prior. } 
\end{example}

\begin{figure}
\centering
\begin{tikzpicture}[>=stealth, line cap=round, thick, scale=0.75]
  \fill[gray!20] (0,0) rectangle (4,4);
  \draw          (0,0) rectangle (4,4);

  \foreach \X/\Y in {0/0, 0/2, 4/2, 4/4} {
    \draw[fill=gray!70] (\X,\Y) circle[radius=0.2];
  }

  \draw[fill=blue!70] (2,3) circle[radius=0.15];
  \draw[fill=blue!70] (2,1) circle[radius=0.15];

  \draw[dashed] (0,2) -- (4,2);

  \draw[dashed] (0,2) -- (4,4);

  \draw[dashed] (0,0) -- (4,2);

\end{tikzpicture}
\caption{Illustration of \Cref{ex:high-D}. Here the gray circles are the point masses of the prior probability distribution $\stateCDF$, and blue circles are the points where the interim utility function $\interimU$ is strictly positive.}
\label{fig:high-D}
\end{figure}

\begin{proof} [Proof of \Cref{prop:high-d no poe}]
Consider \Cref{ex:high-D}. 
Let $\numSignals = 2$.
We note that the optimal $\numSignals$-signaling scheme is to fully pool the state $(0, 0.5)$ and $(1, 1)$ to induce a posterior mean at $(0.5, 0.75)$, and fully pool the state $(0, 0)$ and $(1, 0.5)$ to induce a posterior mean at $(0.5, 0.25)$, which yields designer utility $1$, namely
$\myopt[\instance]{\numSignals} = 1$.
However, any axis-aligned rectangular information policy cannot induce the posterior means at $(0.5, 0.25)$ and $(0.5, 0.75)$.
Thus, $\OPT^{\cc{rect}}_{\instance}(\numSignals) = 0$ and $\PoE^{\cc{rect}}(\instance, \numSignals = 2) = 0$.
\end{proof}

The prior distribution in \Cref{ex:high-D} is not uniform. 
In contrast, when the prior is uniform, we are able to obtain a bounded price of explainability, with no assumptions on the utility function. We first show that any $K$-convex-partitional information policy can be converted into a $K$-rectangular information policy achieving a constant approximation guarantee with respect to $K$ (see Appendix \ref{proof:rectangle-convex} for the proof).  

\begin{lemma}\label{lem:rectangle-convex}
	For any instance $\instance = (\stateCDF, \interimU)$ with uniform prior $F$ over $[0, 1]^m$, for any $K$-convex partitional information policy $\cvxsignaling$, there exists a $K$-rectangular information policy $\pi^{\cc{rect}}$ satisfying 
	$\frac{U(\pi^{\cc{rect}})}{U(\cvxsignaling)} 
    \ge \frac{\lfloor K / (1 + 4^m) \rfloor}{K 2^m (3^m - 1)(m + 1)^{m}}$. 
\end{lemma}


Then, we combine \Cref{lem:rectangle-convex} with the $1/(m+1)$ PoE result for $K$-convex-partitional policies in \Cref{thm:convex-1/(m+1)} to obtain the PoE result for $K$-rectangular policies: 
\begin{theorem}\label{thmUniformHigherDimension}
	For any instance $\instance = (\stateCDF, \interimU)$ with uniform prior $F$ over $[0, 1]^m$, for any positive integer $K$,
	$\PoE^{\cc{rect}}(\instance, \numSignals) \geq \frac{\lfloor K / (1 + 4^m) \rfloor}{K 2^m (3^m - 1)(m + 1)^{m+1}}$. 
\end{theorem}


While the above PoE lower bound is at a constant level with respect to $K$, it approaches $0$ exponentially as the dimension $m$ increases.
Is this exponential bound improvable? 
Unfortunately, no. There exists an instance with uniform prior where no rectangular policy can approximate the unrestricted optimal policy by a factor better than $1/C^m$ for some $C > 1$, 
as shown by the following theorem (with the proof given in Appendix \ref{proof:high-dimension-negative}):
\begin{theorem} \label{thm:high-dimension-negative}
There exists an instance $\instance$ with uniform prior $F$ over $[0, 1]^m$ and binary utility function $u$ such that, for any $K\ge 2$, $\PoE^{\cc{rect}}(\instance, \numSignals) \le O\big( \sqrt{m} (\frac{2}{e} )^m \big) = O( 1/C^m)$.  
\end{theorem}


\section{Computational Complexity of Explainable Information Design}
\label{sec:complexity}

We now turn to the computational complexity of explainable information design. We first establish that the problem of finding an exactly optimal explainable information policy is computationally intractable (NP-hard), even under mild assumptions on the interim utility function. We then show how to find an approximately optimal explainable information policy efficiently. We focus on the $(m=1)$-dimensional case in this section. 

\subsection{NP-Hardness of Computing Optimal Explainable Policies}
\label{subsec:hardness}

\taoedit{Explainable (interval-partitional) information policies, despite practical attractiveness, are difficult to optimize over.   
Specifically, we prove a computational hardness result:}   
\begin{theorem}
    \label{thm:optimal-NP-hard}
    Computing the optimal $K$-partitional information policy is NP-hard, even for binary-valued or $L$-Lipschitz piecewise-linear utility functions for any $L$.
\end{theorem}

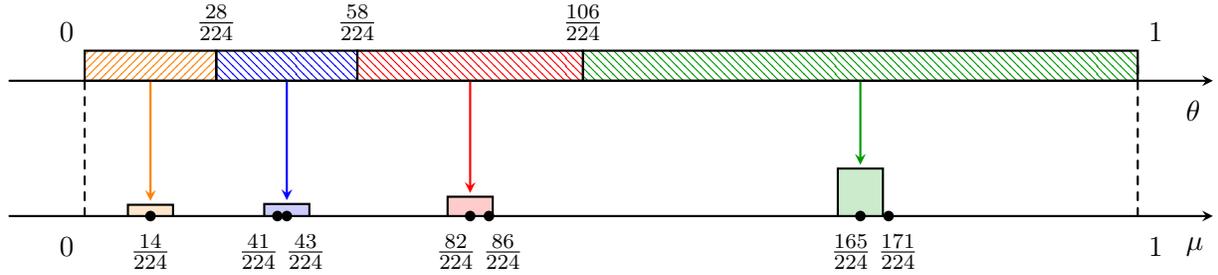
\begin{figure} 
    \centering
\begin{tikzpicture}[
  >=stealth,
  line cap=round,
  thick,
  scale=1,
  every node/.append style={font=\scriptsize}
]
  \def\stretchFactor{11}
  \def\L{1}      
  \def\R{1+\stretchFactor}      
  \def\Ytop{1.4}   
  \def\Ybot{0}   
  \pgfmathsetmacro{\Mone}{\L+28 * (\R-\L)/224} 
  \pgfmathsetmacro{\Mtwo}{\L+58 * (\R-\L)/224} 
  \pgfmathsetmacro{\Mthree}{\L+106 * (\R-\L)/224} 

  \draw[pattern=north east lines,pattern color=orange, draw=black]
    (\L,\Ytop-0.2) rectangle (\Mone,\Ytop+0.2);
  
  \path[pattern=north west lines,pattern color=blue, draw=black]
    (\Mone,\Ytop-0.2) rectangle (\Mtwo,\Ytop+0.2);
  
  \path[pattern=north east lines,pattern color=red, draw=black]
    (\Mtwo,\Ytop-0.2) rectangle (\Mthree,\Ytop+0.2);

  \path[pattern=north west lines,pattern color=teal, draw=black]
    (\Mthree,\Ytop-0.2) rectangle (\R,\Ytop+0.2);

    \draw[->, orange] ({\L+14*(\R-\L)/224},\Ytop - 0.2) -- (1+\stretchFactor*14/224, \Ybot + 0.167);

    \draw[->, blue] ({\L+43*(\R-\L)/224},\Ytop - 0.2) -- (1+\stretchFactor*43/224, \Ybot + 0.176); 

    \draw[->, red] ({\L+82*(\R-\L)/224},\Ytop - 0.2) -- (1+\stretchFactor*82/224, \Ybot + 0.273); 
    
    \draw[->, teal] ({\L+165*(\R-\L)/224},\Ytop - 0.2) -- (1+\stretchFactor*165/224,\Ybot+0.653);
 
  \draw[->] (0.15,\Ytop-0.2) -- (\R+1,\Ytop- 0.2);     
  \draw[->] (0.15,\Ybot) -- (\R+1,\Ybot);     

  \draw[dashed] (\L,\Ybot) -- (\L,\Ytop);
  \draw[dashed] (\R,\Ybot) -- (\R,\Ytop);
  \node[left ] at (\L,\Ytop + 0.45) {0};
  \node[right] at (\R,\Ytop + 0.45) {1};
  \node[left ] at (\L,\Ybot - 0.3) {0};
  \node[right] at (\R,\Ybot - 0.3) {1};

  \node[below] at (1+\stretchFactor*14/224,\Ybot) {$\frac{14}{224}$};
  \node[below] at (1+\stretchFactor*47/224,\Ybot) {$\frac{43}{224}$};
  \node[below] at (1+\stretchFactor*79/224,\Ybot) {$\frac{82}{224}$};
  \node[below] at (1+\stretchFactor*163/224,\Ybot) {$\frac{165}{224}$};
  
  \node[below] at (1+\stretchFactor*37/224,\Ybot) {$\frac{41}{224}$};
  \node[below] at (1+\stretchFactor*89/224,\Ybot) {$\frac{86}{224}$};
  \node[below] at (1+\stretchFactor*173/224,\Ybot) {$\frac{171}{224}$};

  \node[above] at ({\L+28*(\R-\L)/224},\Ytop + 0.17) {$\frac{28}{224}$};
  \node[above] at ({\L+ 58*(\R-\L)/224},\Ytop + 0.17) {$\frac{58}{224}$};
  \node[above] at ({\L+ 106*(\R-\L)/224},\Ytop + 0.17) {$\frac{106}{224}$};

  \node[right] at (\R+0.5,\Ytop - 0.45) {$\state$};
  \node[right] at (\R+0.5,\Ybot - 0.32) {$\posteriorMean$};

  \fill[orange!20] (1+\stretchFactor*14/224-0.3,\Ybot) rectangle (1+\stretchFactor*14/224 +0.3,\Ybot+0.15);
  \draw          (1+\stretchFactor*14/224-0.3,\Ybot) rectangle (1+\stretchFactor*14/224+0.3,\Ybot+0.15);
  
  \fill[blue!20] (1+\stretchFactor*43/224-0.3,\Ybot) rectangle (1+\stretchFactor*43/224 +0.3,\Ybot+0.1607);
  \draw          (1+\stretchFactor*43/224-0.3,\Ybot) rectangle (1+\stretchFactor*43/224+0.3,\Ybot+0.1607);
  
  \fill[red!20] (1+\stretchFactor*82/224-0.3,\Ybot) rectangle (1+\stretchFactor*82/224 +0.3,\Ybot+0.2571);
  \draw          (1+\stretchFactor*82/224-0.3,\Ybot) rectangle (1+\stretchFactor*82/224+0.3,\Ybot+0.2571);
  
  \fill[teal!20] (1+\stretchFactor*165/224-0.3,\Ybot) rectangle (1+\stretchFactor*165/224 +0.3,\Ybot+0.6321);
  \draw          (1+\stretchFactor*165/224-0.3,\Ybot) rectangle (1+\stretchFactor*165/224+0.3,\Ybot+0.6321);

  \fill ({\L+14*(\R-\L)/224}, \Ybot) circle[radius=2pt];
  \fill ({\L+43*(\R-\L)/224}, \Ybot) circle[radius=2pt];
  \fill ({\L+82*(\R-\L)/224}, \Ybot) circle[radius=2pt];
  \fill ({\L+165*(\R-\L)/224}, \Ybot) circle[radius=2pt];
  
  \fill ({\L+41*(\R-\L)/224}, \Ybot) circle[radius=2pt];
  \fill ({\L+86*(\R-\L)/224}, \Ybot) circle[radius=2pt];
  \fill ({\L+171*(\R-\L)/224}, \Ybot) circle[radius=2pt];
\end{tikzpicture}
    \caption{\label{figConnectedPartitionReduction} Illustration of the reduction from \textsc{Partition} with $c = (1, 2, 3)$. The black points at the bottom are the elements of the set $X$. The leftmost of these points must be chosen as the centerpoint of an interval, in addition to one point from each pair of the subsequent pairs, corresponding to choosing each $b_j \in \{-1, +1\}$. Once the set of centerpoints has been chosen, the $(n + 1)$-partitional information policy is uniquely determined, covering the entire interval $[0, 1]$ exactly if and only if $b$ is a solution to the \textsc{Partition} instance $c$.}
\end{figure}
    
\taoedit{\Cref{thm:optimal-NP-hard} is proved by a reduction from the \textsc{Partition} problem (defined below). A reduction is a standard way in theoretical computer science to prove computational hardness: we show how to transform any instance of the \textsc{Partition} problem into an instance of our optimal partitional-policy problem such that solving the latter would also solve the former. Since \textsc{Partition} is NP-hard \citep{KarpOriginalReductions}, this implies that there is no polynomial-time algorithm for computing an optimal partitional policy unless P = NP.}
        
\begin{definition}[The \textsc{Partition} problem]
    Given a vector of positive integers $c \in \zz_{> 0}^n$, 
    decide whether there exists a vector $b \in \{-1, +1\}^n$ such that
	$\sum_{i = 1}^{n} b_i c_i = 0$. 
\end{definition}
    
To perform the reduction, we transform a \textsc{Partition} problem instance to an information design instance with an interim utility function $\interimU$ that is non-zero in $2n+1$ posterior means in $[0, 1]$, corresponding to $+c_i$ and $-c_i$ for each $i\in\{1, \ldots, n\}$. The information design instance is constructed in a way that the signals from any information policy can only induce one of the two posterior means corresponding to $\pm c_i$, which means that we can only choose one sign for each $c_i$.
If the optimal information policy achieves expected utility $1$, then it must induce $n$ posterior means among the $2n$ points where $\interimU$ is non-zero.  This means that we can successfully assign a sign for every $c_i$ such that $\sum_{i=1}^n b_i c_i = 0$.  If the expected utility of the optimal information policy is less than $1$, then it induces less than $n$ posterior means among the $2n$ points, so the \textsc{Partition} problem instance is a NO instance. 

Specifically, 
given any \textsc{Partition} problem instance $c = (\seq{c}{n})$, we let $d := \frac{1}{2^{n + 1}}$ and $T := 1 + \sum_{i = 1}^{n} c_i$, and then define
$$X := \left\{d\right\} \cup \left\{d\left(3 \cdot 2^{j - 1} - \frac{c_j}{2T}\right) \suchthat j \in [n]\right\} \cup \left\{d\left(3 \cdot 2^{j - 1} + \frac{c_j}{2T}\right) \suchthat j \in [n]\right\}.$$
    Let the prior be $\stateCDF = \mathrm{Uniform}[0, 1]$.  Let the interim utility be the binary-valued function
	$u(x) = \indicator{x\in X}$.\footnote{Alternatively, we could construct an $L$-Lipschitz piecewise-linear utility function with $2\abs{X} = 4n + 2$ pieces sloping downward on either side from each point in $X$ with slope $L$.}
    In the full proof in Appendix \ref{proof:optimal-NP-hard}, we show that the designer can achieve expected utility 1 (meaning all posteriors are in $X$) with a partitional information policy if and only if the \textsc{Partition} problem instance $c$ has a solution; and furthermore, when it is possible, it can be done with $K = n + 1$ intervals.
    \Cref{figConnectedPartitionReduction} illustrates a simple instance produced by this reduction, explains the intuition of the reduction, and presents a 4-partitional information policy that achieves utility 1.

\subsection{Achieving $\frac{1}{2}\myopt[\instance]{\numSignals}$ for Piecewise Lipschitz Utility Functions}

The previous NP-hardness result means that one cannot hope to compute an optimal $\numSignals$-partitional information policy in all instances efficiently. 
However, since our PoE characterization in \Cref{thm:K-partitional-1/2} shows that the optimal $\numSignals$-partitional policy cannot be better than $1/2$ of the optimal unrestricted $\numSignals$-signaling scheme in the worst case, another reasonable and easier goal is to find a $\numSignals$-partitional policy $\pi_A$ that achieves this $1/2$ approximation guarantee, i.e., $U(\pi_A) \ge \frac{1}{2} \myopt[\instance]{\numSignals}$. We show that this goal is achievable (up to $\eps$ error) by a polynomial-time algorithm, for a large class of interim utility functions. 

We consider the class of possibly discontinuous piecewise Lipschitz interim utility functions.  A function $u$ is \emph{$(N, \Lip)$-piecewise Lipschitz} if there exist $0 = z_0 < z_1 < \cdots < z_N = 1$ such that $u$ is $\Lip$-Lipschitz continuous in every interval $(z_{i-1}, z_i)$.  In particular, piecewise Lipschitz functions include Lipschitz functions and piecewise constant functions. 

Our algorithm is based on the conversion idea in the proof of the PoE lower bound in \Cref{sec:convex-1/(m+1)-lower-bound}: first find an optimal $\numSignals$-signaling scheme $\signalProb^*$, then use the procedure in \Cref{sec:convex-1/(m+1)-lower-bound} to convert $\signalProb^*$ into a $\numSignals$-partitional information policy $\pi_A$ that achieves $1/2$ of the expected utility of $\signalProb^*$. 

The first step of finding an optimal $\pi^*$ admits a polynomial-time algorithm:  

\begin{lemma}[Proposition 3 of \citealp{arieli_optimal_2019}] 
For $(N, \Lip)$-piecewise Lipschitz utility function $\interimU$, an $\eps$-additively  approximately optimal $\numSignals$-signaling scheme $\signalProb^*$ can be computed in polynomial (in $N, L, 1/\eps$) time.
Moreover, $\pi^*$ is a bi-pooling policy with number of signals at most $NL/\eps+1$. \jamie{I feel like the $\varepsilon$ should be multiplied by something here... are you sure this is right?}
\end{lemma}


The second step of converting $\pi^*$ to $\pi_A$ is described in \Cref{alg:conversion}; it is the conversion procedure in \Cref{sec:convex-1/(m+1)-lower-bound} specialized to the single-dimensional case.
As $\pi^*$ is bi-pooling, it divides the state space into multiple bi-pooling intervals $\{(b_{i-1}, b_i)\}_{i=1}^n$. 
We enumerate those intervals.
If $\pi^*$ sends only one signal for the interval, then let $\pi_A$ do the same.  If $(b_{i-1}, b_i)$ is a bi-pooling interval, then we further divide it into two parts $(b_{i-1}, c_i), (c_i, b_i)$, 
using a carefully chosen cutoff $c_i$. 
One of these two parts achieves at least half of the expected utility of $\pi^*$ in interval $(b_{i-1}, b_i)$. Enumerating over all intervals, we thus obtain a $\pi_A$ that achieves an expected utility of at least $\frac{1}{2} U(\pi^*)$.

\begin{algorithm} [H]
\caption{Converting bi-pooling policy $\pi^*$ to interval-partitional policy $\pi_A$}
\label{alg:conversion}



\DontPrintSemicolon
\LinesNumbered

Let $\{(b_{i-1}, b_i)\}_{i=1}^n$ be the bi-pooling interval structure of $\signalProb^*$. \; 

\For{$i=1$ to $n$}{
    \textbf{if} {\em $\signalProb^*$ sends one signal for $(b_{i-1}, b_i)$} \textbf{then}
    {let $\signalProb_\partionSet$ do the same for $(b_{i-1}, b_i)$.} \;
    \If {$\signalProb^*$ sends two signals for $(b_{i-1}, b_i)$}{
        Let $\posteriorMean_{i, 1} < \posteriorMean_{i, 2}$ be the posterior means induced by the two signals. \;
        Let $p_{i, 1}$,
        $p_{i, 2}$ be the probabilities that these two signals are sent by $\pi^*$ conditioning on the state being in $(b_{i-1}, b_i)$. \;
        \If { $p_{i, 1} \interimU(\posteriorMean_{i, 1}) \, \ge \, p_{i, 2} \interimU(\posteriorMean_{i, 2})$}{
            Find $c_i \in (b_{i-1}, b_i)$ such that 
            $\expect[\state \sim \stateCDF]{\state \smid (b_{i-1}, c_i)} := \frac{\int_{b_{i-1}}^{c_i} \state \statePDF(\state)\; \dd \state}{\int_{b_{i-1}}^{c_i} \statePDF(\state)\; \dd \state} = \posteriorMean_{i, 1}$. \;
        }\Else{
            Find $c_i \in (b_{i-1}, b_i)$ such that 
            $\expect[\state \sim \stateCDF]{\state \smid (c_i, b_i)} := \frac{\int_{c_i}^{b_i} \state \statePDF(\state)\; \dd \state}{\int_{c_i}^{b_i} \statePDF(\state)\; \dd \state} = \posteriorMean_{i, 2}$. \;
        }
        Divide $(b_{i-1}, b_i)$ into $(b_{i-1}, c_i)$ and $(c_i, b_i)$. Let $\signalProb_\partionSet$ send one signal for each of the two intervals. \;
    }
}
\Return{$\signalProb_\partionSet$}
\end{algorithm}

\begin{lemma}[corollary of the analysis in \Cref{sec:convex-1/(m+1)-lower-bound}] \label{lem:1/2-conversion}
For any bi-pooling policy $\signalProb^*$ with $\numSignals$ signals, \Cref{alg:conversion} returns a $\numSignals$-partitional policy $\signalProb_\partionSet$ satisfying $\designerExpU(\signalProb_\partionSet) \ge \frac{1}{2} \designerExpU(\signalProb^*)$. 
\end{lemma}



A key step 
in \Cref{alg:conversion} is to solve the following problem: given $b \in (0, 1)$ and $\posteriorMean \in (b, 1)$, find a $c \in (\posteriorMean, 1)$ such that the conditional mean of the state $\expect[\stateCDF]{\state \mid \state \in (b, c)}$ equals $\posteriorMean$.  Due to the strict monotonicity of the conditional mean with respect to $c$, this problem can be solved efficiently by, e.g., a binary search.  
\Cref{alg:conversion} thus runs in polynomial time. 



As $\pi^*$ can be computed and converted to $\pi_A$ in polynomial time, and \Cref{lem:1/2-conversion} guarantees 
$\designerExpU(\signalProb_\partionSet) \ge \frac{1}{2} \designerExpU(\signalProb^*)$, we conclude that:  
\begin{theorem} \label{thm:1/2-piecewise-Lipschitz}
For $(N, \Lip)$-piecewise Lipschitz utility function $u$, for $K\ge \Theta(N\Lip/\eps)$, there exists a polynomial-time (in $N, \Lip, 1/\eps$) algorithm to compute a $\numSignals$-partitional policy $\signalProb_\partionSet$ that satisfies $\designerExpU(\signalProb_\partionSet) \ge \frac{1}{2} \myopt[\instance]{\numSignals} - \eps$. 
\end{theorem}

\section{Conclusions}

In this work, we introduce explainable information design and quantify the price of explainability in linear information design environments.
Inspired by the 
literature on explainable clustering, we view deterministic, monotone, partitional signaling policies as naturally explainable in the single-dimensional case, where we derive matching upper and lower bounds on the price of explainability for $K$-partitional policies against unrestricted $K$-signaling schemes:
tight bound of $\sfrac{1}{2}$ in general,
and tight bound of $\sfrac{2}{3}$ under uniform prior. 
We then extend the analysis to multi-dimensional state spaces by studying two notions of explainability: convex-partitional policies and axis-aligned rectangular policies. 
For convex-partitional policies, we prove a tight PoE of $\sfrac{1}{(m+1)}$, while for rectangular policies we establish a PoE guarantee under uniform prior that is independent of $K$ but 
exponential in dimension $m$, with a lower bound showing that this exponential dependence in $m$ is inherent. 
On the computational complexity side,
we prove that optimizing for the best explainable policy is NP-hard in general, but an explainable policy with  the tight $\sfrac{1}{2}$ approximation guarantee can be computed efficiently for the large class of piecewise Lipschitz utility functions. 

Many interesting questions remain open for future work, including identifying broad yet tractable structural conditions under which the high-dimensional problem admits a constant price of explainability that is independent of both the state dimension and the number of signals; developing a useful and computationally tractable notion of the price of explainability when the state space is discrete; and extending our framework beyond linear designer utilities.

\appendix

\crefalias{section}{appendix}

\section{Omitted Proofs in \Cref{sec:price-of-explainable}}

\begin{claim}
\label{claim:uniform-capacity}
Under uniform prior on $[0,1]$, for any signal $s$ sent with total probability $p\in(0,1]$, its posterior mean $\mu_s$ satisfies $\frac{p}{2} \le \mu_s \le 1-\frac{p}{2}$. 
\end{claim}

\begin{proof}
The posterior mean is 
\begin{equation*}
    \mu_s ~ = ~ \tfrac{\int_0^1 \theta \pi(s\smid\state) \; \dd \state}{\int_0^1 \pi(s\smid \state) \; \dd \state} ~=~ \frac{1}{p}\int_0^1 \theta \pi(s\smid\state) \; \dd \state ~\ge~ \frac{1}{p} \int_0^p \theta \cdot 1 \; \dd \state ~=~ \frac{p}{2}. 
\end{equation*}
Symmetrically, one can prove $\mu_s \le 1 - \frac{p}{2}$. 
\end{proof}

\subsection{Proof of \Cref{prop:uniform-binary-2/3}}
\label{proof:uniform-binary-2/3}
	For the lower bound, see \Cref{ex:continued}. For the upper bound, fix an instance $\mathcal{I} = (F, u)$ and let $X = \{ x \in [0, 1]: u(x) = 1\}$. 
    There are three cases to consider.
	
	\textbf{Case 1:}
    There is some point $x \in X \cap [\frac13, \frac23]$. Then consider the 2-partitional policy with one interval centered at $x$, expanded as large as possible, and the other interval its complement. The first interval has length at least $\frac23$, so the designer's expected utility is at least $\frac23$; whereas the optimal signaling scheme yields utility at most 1. Thus, $\PoE(\mathcal{I}, K) \leq \frac23$.
	
	\textbf{Case 2:} There are points $x \in X \cap [\frac16, \frac13]$ and $y \in X \cap [\frac23, \frac56]$. Then we can apply the same construction from Case 1 separately on the left half and right half of $[0, 1]$ and merge them together to get a 4-partitional policy attaining designer utility $\frac23$.
	
	\textbf{Case 3:} Neither Case 1 nor Case 2 occurs. This means that either $X \cap [\frac16, \frac23] = \emptyset$ or $X \cap [\frac13, \frac56] = \emptyset$. 
    By symmetry, assume $X \cap [\frac16, \frac23] = \emptyset$. Let $u_1$ be the amount of designer utility in the optimal $K$-signaling scheme coming from signals with posterior means within $[0, \frac16]$ and let $u_2$ be the amount of designer utility coming from signals with posterior means within $[\frac23, 1]$. As there are no valuable posterior means in $(\frac{1}{6}, \frac{2}{3})$, the total optimal utility is precisely $u_1 + u_2$. Let $x$ be the maximum of $X \cap [0, \frac16]$ and $y$ be the minimum of $X \cap[\frac23, 1]$. Note that $2x \leq \frac13 \leq 2y - 1$. We claim that the partition $P = [0, 2x] \cup (2x, 2y - 1) \cup [2y - 1, 1]$
	attains utility at least $u_1 + u_2$, so that $\PoE(\mathcal{I}, K) = 1$.
	
	First consider the utility from the first interval of $P$, $[0, 2x]$. The mean of this interval is $x \in X$, so the designer's utility is the length of the interval, $2x$. We claim that $u_1 \leq 2x$. Let $S$ be the set of all signals in the optimal signaling scheme that induce posterior means in $X \cap [0, \frac16]$, which has total probability mass $u_1$. By the definition of $x$, each of these posterior means lies in $[0, x]$. If the designer were to pool all signals in $S$ into one new signal, its posterior mean would also lie in $[0, x]$.
    Since the total probability of the pooled signal is $u_1$, its posterior mean is at least $u_1/2$ by \Cref{claim:uniform-capacity}, thus $u_1/2 \leq x$, as desired. An analogous argument shows that $u_2 \leq 2(1 - y)$, which is the length of the third interval of $P$. 
    Thus, these intervals collectively give a utility of at least $u_1 + u_2$.

\subsection{Proof of \Cref{lemma:uniform-prior-better-among-2}}
\label{proof:lemma:uniform-prior-better-among-2}
Recall that the bi-pooling interval is normalized to $[0,1]$. The total probability of the two signals are $p_1, p_2$. Their posterior means are $\mu_1, \mu_2$. We normalize the interim utility of signal 1 to be $u_1 = u(\mu_1) = 1$ and denote $r := u_2 / u_1 \in [0,1]$. 
The payoff of the original bi-pooling policy is $U(\pi^*) = p_1 + p_2 r$. 

Let $d:=\mu_2-\mu_1$. By Bayes plausibility $p_1 \mu_1 + p_2 \mu_2=\frac12$ and $p_1 + p_2 = 1$, we get 
\[
    \mu_1=\frac12 - p_2 d, \qquad \mu_2=\frac12 + p_1 d .
\]
By Claim~\ref{claim:uniform-capacity}, the signal inducing $\mu_1$ with
probability $p_1$ satisfies $p_1 \le 2\mu_1$, and the signal inducing $\mu_2$ with
probability $p_2$ satisfies $p_2 \le 2(1-\mu_2)$. Adding these two inequalities, 
\[
    1 ~=~ p_1 + p_2 ~\le~ 2\mu_1+2(1-\mu_2) ~=~ 2(1-d) \qquad \implies \qquad 
    d\le \frac12 .
\]

We now lower bound the payoffs of the two partitional policies $\pi_1$ and
$\pi_2$. Policy $\pi_1$ first takes the maximal interval
from the left with mean $\mu_1$, namely $[0,2\mu_1]$. If possible, it then takes
an interval with mean $\mu_2$. Therefore $U(\pi_1) \ge A := 2\mu_1 + \big(2\mu_2-4\mu_1\big)_+ r$,
where $(x)_+ := \max\{x, 0\}$. 
Policy $\pi_2$ takes the maximal interval from the right with mean $\mu_2$,
namely $[2\mu_2-1,1]$, and hence $U(\pi_2) \ge B := 2(1-\mu_2)r$. 
We will prove $\max\{A,B\} \ge \frac23(p_1 + p_2 r)$ by analyzing two cases: 

\textbf{Case 1: $d\le 1/3$.}
Using $\mu_1 = \frac12 - p_2 d$ and $\mu_2 = \frac12 + p_1 d$, we have
\[
    2\mu_1 ~=~ 1-2p_2d ~\ge~ 1-\frac{2p_2}{3} ~=~ \frac{1+2p_1}{3},
\]
and
\[
    2(1-\mu_2) ~=~ 1-2p_1 d ~\ge~ 1-\frac{2p_1}{3} ~=~ \frac{3-2p_1}{3}.
\]
If $A \ge \frac23(p_1+p_2 r)$ then the claim is proven. Otherwise, since $A\ge 2\mu_1$, we must have
\[
    \frac{1+2p_1}{3} ~ \le~ 2\mu_1 ~\le~ A ~<~ \frac23(p_1 + p_2 r) \quad \implies \quad 
    r>\frac{1}{2p_2}.
\]
Since $p_2 = 1-p_1$ and $4p_1(1-p_1)\le 1$, we have $\frac{1}{2p_2} = \frac{1}{2(1-p_1)} \ge 2p_1$. 
Thus $r> 2p_1$, and 
\[
    B ~=~ 2(1-\mu_2)r ~\ge~ \frac{3-2p_1}{3}r ~ > ~ \frac23(p_1 + p_2 r),
\]
where the last inequality follows 
$r > 2p_1$. Hence $\max\{A,B\}\ge \frac23(p_1 + p_2 r)$.

\textbf{Case 2: $d\ge 1/3$.}
We show that $A$ alone gives the desired bound. Since $d\le 1/2$, at $r=0$ we have $A = 2\mu_1 = 1-2p_2d \ge 1-p_2 = p_1 \ge \frac23 p_1$. 
At $r=1$, we have $A = 2\mu_1+(2\mu_2-4\mu_1)_+ \ge 2(\mu_2-\mu_1) = 2d \ge \frac23$. 
We note that both $A$ and $\frac23(p_1 + p_2 r)$ are affine functions of $r$. Since $A\ge \frac23(p_1 + p_2 r)$
holds at $r=0$ and at $r=1$, it holds for every $r\in[0,1]$. Thus $A\ge \frac23(p_1 + p_2 r)$. 

Combining the two cases, we obtain $\max\{U(\pi_1),U(\pi_2)\} \ge \max\{A,B\} \ge \frac23 U(\pi^*)$.

\section{Omitted Proofs in \Cref{sec:complexity}}

\subsection{Proof of \Cref{thm:optimal-NP-hard}}
\label{proof:optimal-NP-hard}
    To prove that the reduction is correct, we verify two directions.

    First, suppose $b$ is a solution to the \textsc{Partition} problem $c$; we need to construct a partitional policy $\pi_A$ achieving expected utility $1$. Let $d := \frac{1}{2^{n + 1}}$ and $T := 1 + \sum_{i = 1}^{n} c_i$.  Consider the $(n + 1)$-partitional information policy $\pi_{\partionSet}$ with partition points
	\begin{equation*}
		\partionSet := \Big\{d \Big(2^{j + 1} + \tfrac{(-1)^{j + 1}}{T} \sum \nolimits_{i = 1}^{j} b_i c_i \Big) \suchthat j \in \{0, 1, 2, \dots, n\}\Big\}.
	\end{equation*}
	Observe that these endpoints determine the following partition of $[0, 1]$:
    \begin{equation*} 
		\mathcal{P} := \Big\{ [0, 2d] \Big\} \cup \left\{\left[d\Big(2^{j} + \tfrac{(-1)^{j}}{T} \sum \nolimits_{i = 1}^{j - 1} b_i c_i\Big), d\Big(2^{j + 1} + \tfrac{(-1)^{j + 1}}{T} \sum \nolimits_{i = 1}^{j} b_i c_i\Big)\right] \suchthat j \in [n]\right\}.
	\end{equation*}
    To see this, note that the first interval obviously has positive measure, and each other interval indexed by $j$ has length
    \begin{align*}
		& d\left(2^{j + 1} + \tfrac{(-1)^{j + 1}}{T} \Big(b_j c_j + \sum \nolimits_{i = 1}^{j - 1} b_i c_i\Big)\right) - d\Big(2^{j} + \tfrac{(-1)^{j}}{T} \sum \nolimits_{i = 1}^{j - 1} b_i c_i\Big)\\
		& \geq d\left(2^j - \tfrac{1}{T} c_j - \tfrac{2}{T} \sum\nolimits_{i = 1}^{j - 1} c_i\right) \qquad \text{(because $|(-1)^{j+1} b_i | \le 1$)} \\
        & \ge d \left(2 - \tfrac{2}{T} \sum \nolimits_{i = 1}^{n} c_i \right) ~ = ~ 2d \left(1 - \tfrac{T-1}{T} \right) ~ >  ~ 0.
	\end{align*}    
	Thus, each part is a well-defined interval of positive measure. To see that the intervals form a partition, first observe that the left endpoint of the interval for $j = 1$ is $2d$, which is the right endpoint of the first interval $[0, 2d]$. Similarly, for each $2 \leq j \leq n$, the left endpoint of the interval indexed by $j$ is
	$$d\left(2^{j} + \tfrac{(-1)^{j}}{T} \sum \nolimits_{i = 1}^{j - 1} b_i c_i\right) = d\left(2^{(j - 1) + 1} + \tfrac{(-1)^{(j - 1) + 1}}{T} \sum \nolimits_{i = 1}^{j - 1} b_i c_i\right),$$
	which is the right endpoint of interval $(j - 1)$. Finally, since $b$ is a solution to the \textsc{Partition} instance $c$, the rightmost endpoint of the very last interval is
	$$d\left(2^{n + 1} + \tfrac{(-1)^{n + 1}}{T} \sum\nolimits_{i = 1}^{n} b_i c_i\right) = d\left(2^{n + 1} + \tfrac{(-1)^{n + 1}}{T} \cdot 0\right) = d 2^{n + 1} = 1.$$
	Thus, $\mathcal{P}$ is a connected interval partition.
	
	The posterior for the first interval is $d \in X$. The posterior for any other interval at index $j$ is the average of the two endpoints,
	\begin{align}
		& \frac12\left(d\Big(2^{j} + \tfrac{(-1)^{j}}{T} \sum\nolimits_{i = 1}^{j - 1} b_i c_i\Big) + d\left(2^{j + 1} + \tfrac{(-1)^{j + 1}}{T} \Big(b_j c_j + \sum\nolimits_{i = 1}^{j - 1} b_i c_i\Big)\right)\right)\nonumber\\
		& = \frac{d}{2}\left(2^j + 2^{j + 1} + b_j(-1)^{j + 1} \tfrac{c_j}{T} + \tfrac{(-1)^{j}}{T} \sum \nolimits_{i = 1}^{j - 1} b_i c_i - \tfrac{(-1)^{j}}{T} \sum \nolimits_{i = 1}^{j - 1} b_i c_i\right) \nonumber \\ 
		& = d\left(3 \cdot 2^{j - 1} \pm \tfrac{c_j}{2T}\right),\nonumber
	\end{align}
	which is in $X$ as well. Thus, the expected utility of $\pi_A$ is 1.
	
	Conversely, suppose $\pi_A$ is a partitional policy with expected utility $1$; $\pi_A$ gives an interval partition $\mathcal{P}$ of $[0, 1]$ with all posteriors in $X$.
    We need to construct a solution $b$ to the \textsc{Partition} problem $c$.
    
    We claim that there exist $b_1, b_2, \dots, b_{n} \in \{-1, 1\}$ such that, for each $j \in [n]$, 
	\begin{equation*}
	    I(b_j, b_{j + 1}, \dots, b_n) := \left[d\left(2^{j} + \tfrac{(-1)^{n - j}}{T} \sum\nolimits_{i = j}^{n} b_i c_i\right), d\left(2^{j + 1} + \tfrac{(-1)^{n - j + 1}}{T} \sum\nolimits_{i = j + 1}^{n} b_i c_i\right)\right] 
	\end{equation*}
    belongs to $\mathcal{P}$.
	We prove this by induction on $j$ in the reverse direction, starting with the base case of $j = n$. In this case, observe that the point 1 can only be covered by an interval centered at $d\left(3 \cdot 2^{n - 1} \pm \frac{c_n}{2T}\right)$, since an interval centered at any other point in $X$ would have to extend into negative values, as all other points in $X$ are strictly less than $\frac12$.\footnote{For the first interval, we have $d = \frac{1}{2^{n + 1}} \leq \frac{1}{2^2} < \frac12$, and for any other interval indexed by $j' \leq n - 1$, we have $d\left(3 \cdot 2^{j' - 1} \pm \frac{c_j'}{2T}\right) < d\left(3 \cdot 2^{(n - 1) - 1} \pm \frac12\right) \leq d\left(3 \cdot 2^{(n - 1) - 1} + 2^{(n - 1) - 1}\right) = d 2^n = \frac12$. 
    }
	Letting $b_n \in \{-1, 1\}$ be such that the interval is centered at $d\left(3 \cdot 2^{n - 1} + b_n \frac{c_n}{2T}\right)$, the left endpoint is uniquely determined since the right endpoint is at 1. Specifically, 
    this interval is
	\begin{align*}
		\left[2 d\left(3 \cdot 2^{n - 1} + b_n \tfrac{c_n}{2T}\right) - 1, 1\right] &= \left[d\left(3 \cdot 2^{n} + b_n \tfrac{c_n}{T} - \tfrac{1}{d}\right), 1\right]\\
		& = \left[d\left(3 \cdot 2^{n} + b_n \tfrac{c_n}{T} - 2^{n+1} \right), 1\right] = \left[d\left(2^{n} + b_n \tfrac{c_n}{T}\right), 1\right] 
		= I(b_n).
	\end{align*}
	Thus, we have shown that $I(b_n) \in \mathcal{P}$, proving the base case.
	
	The inductive case is quite analogous; we just have to keep track of the left endpoint of the previous interval coming from the inductive hypothesis. Fix any $j \in [n - 1]$ and suppose the claim holds for $j + 1$, i.e., there exist $b_{j + 1}, b_{j + 2}, \dots b_n \in \{0, 1\}$ such that $I(b_{j + 1}, b_{j + 2}, \dots, b_n) \in \mathcal{P}$. The leftmost endpoint of this interval is
	\[ x := d\left(2^{j + 1} + \tfrac{(-1)^{n - j + 1}}{T} \sum \nolimits_{i = j + 1}^{n} b_i c_i\right), \]
	which must be covered by an interval to the left. Note that any centerpoint of index greater than $j$ cannot work, as it lies to the right of $x$. As in the base case, the only valid choice is an interval centered at $d\left(3 \cdot 2^{j - 1} \pm \frac{c_n}{2T}\right)$, since an interval centered at any other point in $X$ would have to extend into negative values, as all other points in $X$ are strictly less than $\frac{x}{2}$. 
    
	Letting $b_j \in \{-1, 1\}$ be such that the interval is centered at $d\left(3 \cdot 2^{j - 1} + (-1)^{n - j}b_j \frac{c_j}{2T}\right)$, the left endpoint is uniquely determined since the right endpoint is $x$. Specifically, this interval is
	\begin{align*}
		& \left[2 d\left(3 \cdot 2^{j - 1} + (-1)^{n - j}b_j \tfrac{c_j}{2T}\right) - x, ~ x\right]\\
		& = \left[d\left(3 \cdot 2^{j} + (-1)^{n - j}b_j \tfrac{c_j}{T} - 2^{j + 1} - \tfrac{(-1)^{n - j + 1}}{T} \sum \nolimits_{i = j + 1}^{n} b_i c_i\right), ~ x\right]\\
		& = \left[d\left(2^{j} + \tfrac{(-1)^{n - j}}{T} \sum \nolimits_{i = j}^n b_i c_i\right), ~ d \left(2^{j + 1} + \tfrac{(-1)^{n - j + 1}}{T} \sum \nolimits_{i = j + 1}^{n} b_i c_i\right)\right]\\
		& = I(b_j, b_{j + 1}, \dots, b_n).
	\end{align*}
	Thus, we have shown that $I(b_j, b_{j + 1}, \dots, b_n) \in \mathcal{P}$, proving the inductive case.
	
	By induction, it follows that $I(b_j, b_{j + 1}, \dots, b_n) \in \mathcal{P}$ for all $j \in [n]$. Specializing $j = 1$, we have that $\mathcal{P}$ contains interval $I(\seq{b}{n})$, which has left endpoint \[d\left(2 + \tfrac{(-1)^{n - 1}}{T} \sum \nolimits_{i = 1}^{n} b_i c_i\right).
    \]
	Note that this is greater than zero, so the point must be covered by some interval to the left of $I(\seq{b}{n})$. In fact, the only way to cover this point is by the interval centered at $d$. But the left endpoint of this interval is zero, so the right endpoint must be $2d$, so the above expression must equal $2d$, otherwise the intervals do not exactly touch. This is equivalent to $\sum_{i = 1}^{n} b_i c_i = 0$, so $b$ is a solution to the \textsc{Partition} problem $c$.


\section{Omitted Proofs in \Cref{sec:high-dimension}}

\subsection{Proof of \Cref{lem:extreme-point-solution}}
\label{proof:extreme-point-solution}

Because $u$ is upper semicontinuous, the objective function $\designerExpU(G) = \int_{[0, 1]^m} \interimU(\posteriorMean) \, \dd G(\posteriorMean)$ is upper semicontinuous in the probability measure $G$ under the weak topology (by Portmanteau theorem). 
Since the feasible region $\MPC(F, K)$ is a compact set, a solution to $\myopt[\instance]{\numSignals} = \max_{G \in \mpc{\stateCDF, \numSignals}} \designerExpU(G)$ exists. 
Let
\begin{align*}
    M^* 
    & = \big\{ G^* \in \mpc{\stateCDF, \numSignals} \mid \designerExpU(G^*) 
    = \myopt[\instance]{\numSignals} \big\} \\
    & = \big\{ G^* \in \mpc{\stateCDF, \numSignals} \mid \designerExpU(G^*) \ge \myopt[\instance]{\numSignals} \big\} 
\end{align*}
be the set of solutions. 
Because $\designerExpU(G)$ is upper semicontinuous, $M^*$ as the preimage of a closed superlevel set is a closed subset of $\mpc{\stateCDF, \numSignals}$, and hence compact.  The Krein-Milman theorem states that the set of extreme points of a compact (but not necessarily convex) subset $M^*$ of a Hausdorff locally convex topological vector space (the space of finite signed measures on $[0, 1]$ in our case) has the same closed convex hull as $M^*$.
So, $M^*$ must contain at least one extreme point, denoted by $G^*$.

We then prove that $G^*$ is an extreme point of $\mpc{\stateCDF, \numSignals}$. $G^*$ cannot be written as the convex combination of two points in $M^*$ because $G^*$ is an extreme point of $M^*$. If $G^*$ can be written as the convex combination
of two points $G_1$ and $G_2 $ in $\mpc{\stateCDF, \numSignals}$ where one of them is not in $M^*$, then by the linearity of $U$, we have 
$\designerExpU(G^*) = \lambda \designerExpU(G_1) + (1-\lambda) \designerExpU(G_2)$ for any $\lambda \in (0, 1)$. 
Say $G_1 \in \mpc{\stateCDF, \numSignals}$ and $G_2 \in \mpc{\stateCDF, \numSignals} \setminus M^*$, so $\designerExpU(G_1) \le \designerExpU(G^*)$ and $\designerExpU(G_2) < \designerExpU(G^*)$. Then, $\lambda \in (0, 1)$ implies $\lambda \designerExpU(G_1) + (1-\lambda) \designerExpU(G_2)  < \designerExpU(G^*)$, 
a contradiction. Therefore, $G^*$ as an optimal solution to $\myopt[\instance]{\numSignals}$ is an extreme point of $\mpc{\stateCDF, \numSignals}$.

\subsection{Proof of \Cref{lemma:compare-two-probability}}
\label{proof:compare-two-probability}
Consider the infinite-dimensional linear program with variable $\pi(s_0|\theta), \forall \theta \in P_i$:  
\begin{align*}
    \max_{0 \le \pi(s_0 | \cdot) \le 1} \quad & \int_{\theta \in P_i} \pi(s_0 | \theta)\; \dd F(\theta) \\
    \text{s.t.} \qquad & \tfrac{\int_{\theta \in P_i}\theta \cdot  \pi(s_0 | \theta)\; \dd \stateCDF(\theta)}{\int_{\theta \in P_i} \pi(s_0 | \theta)\; \dd \stateCDF(\theta)} = \mu_0 \quad \iff \quad \int_{\theta \in P_i} (\theta - \mu_0) \cdot  \pi(s_0 | \theta)\; \dd \stateCDF(\theta) = \bm 0. 
\end{align*}
 By definition, $\pi^*$ is a feasible solution to this program. We then prove that the indicator function of the halfspace $H_i^*$, $\pi(s_0|\theta) = I_{\{\theta \in P_i\cap H_i^*\}}$, is an optimal solution to this linear program, which will prove the claim $\int_{\theta \in P_i \cap H_i^*} \dd F(\theta) \ge \int_{\theta \in P_i} \pi^*(s_0 | \theta)\; \dd F(\theta)$. 

The Lagrangian with only the first constraint is $\int_{P_i} \big( 1 + \lambda^\top (\theta - \mu_0) \big) \pi(s_0 | \theta)\; \dd F(\theta)$,
with Lagrange multipliers $\lambda \in \reals^m$. Fixing $\lambda$, the maximizer over $\pi(s_0|\theta) \in [0, 1]$ is:
\begin{equation*}
    \pi_{\lambda}(s_0| \theta) =
        ~ 1 ~ \text{ if } \lambda^\top ( \theta - \mu_0) > -1 \quad \text{and} \quad 
        0 ~ \text{ if } \lambda^\top (\theta - \mu_0) < -1. 
\end{equation*}
The case of $\lambda^\top (\theta - \mu_0) = -1$ does not matter because such $\theta$ has $F$-measure $0$. Therefore, the optimal solution must be the indicator function of a halfspace, $H_{\lambda} = \big\{ \theta : \lambda^\top (\theta - \mu_0) \ge -1 \big\}$. 
Recall that $H_i^*$ is a halfspace that satisfies the constraint of the linear program and its objective value $\int_{P_i} 1_{\{\bm \theta \in H_i^*\}} \dd F(\theta)$ is maximal among all such halfspaces. So, the indicator function of $H_i^*$ is an optimal solution to the linear program.

\subsection{Proof of \Cref{lem:m-optimal-1}}
\label{app:lem:m-optimal-1}

Recall that we have an instance where the prior distribution $F$ is supported on $m+1$ affinely independent points $x_1, \ldots, x_{m+1}$ in $[0, 1]^m$ and $x_0 ~ = ~ \tfrac{1}{m+1} \sum\nolimits_{i=1}^{m+1} x_i$, 
with mass $f(x_i) = p = \tfrac{1}{m+1} - \eps$ for $i \in [m+1]$ and $f(x_0) = 1 - (m+1) p = (m+1)\eps$. For each $i \in [m+1]$, $\mu_i$ is the mean of $x_i$ and $x_0$ weighted by the prior mass:  
\begin{equation*}
    \mu_i ~ = ~ \frac{f(x_i) x_i + f(x_0) x_0}{f(x_i) + f(x_0)}  ~ = ~ \frac{p x_i + (1-(m+1)p) x_0}{1 - mp}. 
\end{equation*}
The interim utility $\interimU$ is $1$ on $\{\mu_1, \ldots, \mu_{m+1}\}$ and $0$ everywhere else.  
We will prove that the optimal signaling scheme $\pi^*$ has expected utility $U(\pi^*) = 1$. 

Consider the signaling scheme $\pi^*$ that sends signals $s_1, \ldots, s_{m+1}$ with the following conditional probability: $\forall i \in [m+1]$, 
\begin{equation*}
    \pi^*(s_i \smid x_i) = q \in (0, 1), \quad
    \pi^*(s_i \smid x_0) = \tfrac{1}{m+1}, \quad 
    \pi^*(s_i \smid x_j) = \tfrac{1-q}{m}, ~~ \forall j \in [m+1]\setminus\{i\}.
\end{equation*}
Note that $\sum_{i=1}^{m+1} \pi^*(s_i \smid x_k) = 1$ for any $k \in \{0, 1, \ldots, m+1\}$, so the signaling scheme is valid.  We will find a $q \in (0, 1)$ such that the posterior mean induced by each signal $s_i$ is equal to $\mu_i$. The posterior mean induced by $s_i$ is
\begin{align*}
    \mu_{s_i, \pi^*} ~ & = ~ \frac{f(x_i) q \cdot x_i ~ + ~ f(x_0) \tfrac{1}{m+1} \cdot x_0 ~ + ~ \sum\nolimits_{j \in [m+1]\setminus\{i\}} f(x_j) \tfrac{1-q}{m} \cdot x_j}{f(x_i) q ~ + ~ f(x_0) \tfrac{1}{m+1} ~ + ~ \sum\nolimits_{j \in [m+1]\setminus\{i\}} f(x_j) \tfrac{1-q}{m}} \\
    & = ~\frac{p q \cdot x_i ~ + ~ \tfrac{1-(m+1)p}{m+1} \cdot x_0 ~ + ~ \sum\nolimits_{j \in [m+1]\setminus\{i\}} \tfrac{p(1-q)}{m} \cdot x_j}{p q ~ + ~ \tfrac{1-(m+1)p}{m+1} ~ + ~ \tfrac{p(1-q)}{m} \cdot m} \\
    & = ~ (m+1) p q x_i ~ + ~ (1-(m+1)p) x_0 ~ + ~ \tfrac{(m+1) p(1-q)}{m} \sum \nolimits_{j \in [m+1]\setminus\{i\}} x_j\,.
\end{align*}
From $x_0 = \tfrac{1}{m+1} \sum \nolimits_{i=1}^{m+1} x_i$ we have $\sum_{j \in [m+1]\setminus \{i\}} x_j = (m+1) x_0 - x_i$. So, 
\begin{align*}
    \mu_{s_i, \pi^*} ~ & = ~ (m+1) p q x_i ~ + ~ (1-(m+1)p) x_0 ~ + ~ \tfrac{(m+1) p(1-q)}{m} \Big( (m+1) x_0 - x_i \Big) \\
    & = ~ (m+1) p \Big( q - \tfrac{1-q}{m}\Big) x_i ~ + ~ \Big( \tfrac{(m+1)^2 p (1-q)}{m} - (m+1) p + 1 \Big) x_0 \,.
\end{align*}
Equating the above with $\mu_i = \frac{p x_i + (1-(m+1)p) x_0}{1 - mp}$
and comparing the coefficients of $x_i$ and $x_0$, we obtain the following two equations:
\begin{equation*}
    (m+1) p \Big( q - \tfrac{1-q}{m}\Big)  = \tfrac{p}{1-mp}, \qquad \tfrac{(m+1)^2 p (1-q)}{m} - (m+1) p + 1 = \tfrac{1-(m+1)p}{1-mp}, 
\end{equation*}
which have solution $q = \frac{2m+1 - m(m+1) p}{(m+1)^2 (1-mp)}$.
One can verify that $q \in (0, 1)$ for $p = \frac{1}{m+1} - \eps < \frac{1}{m+1}$. So, $q$ is valid solution that ensures each posterior mean $\mu_{s_i, \pi^*}$ to be equal to $\mu_i$, where the interim utility is $1$, so the expected utility of $\pi^*$ is $1$.

\subsection{Proof of \Cref{lem:m-convex-1/m+1}}
\label{app:lem:m-convex-1/m+1}

Let $p:=f(x_i)=\frac{1}{m+1}-\eps$ for $1\le i \le m+1$ and $r:=f(x_0) = (m+1)\eps$. Then 
\[
    x_0=\frac{1}{m+1}\sum \nolimits_{j=1}^{m+1}x_j,
    \qquad
    \mu_i=\frac{p x_i+r x_0}{p+r}.
\]
The interim utility is equal to $1$ at $\mu_1,\dots,\mu_{m+1}$ and equal to $0$ elsewhere. Hence only signals inducing one of the posterior means $\mu_i$ can
contribute to the designer's payoff.

Consider any signal $s_\ell$ of a convex-partitional policy, and suppose that it
induces posterior mean $\mu_i$. Let
\[
    \alpha_k:=\cvxsignaling(s_\ell\mid x_k)
    \quad \text{for } k=0,1,\dots,m+1,
\]
and let $q_k$ be the posterior probability of state $x_k$ conditioning on signal
$s_\ell$, namely 
\[
    q_0=\frac{\alpha_0 r}{\Pr(s_\ell)},
    \qquad
    q_j=\frac{\alpha_j p}{\Pr(s_\ell)}
    \quad \text{for } j=1,\dots,m+1.
\]

We first prove the following structural claim:
\begin{claim}\label{claim:structural}
    If $s_\ell$ induces $\mu_i$, then $s_\ell$ is sent only at $x_i$ and $x_0$, and moreover $\alpha_i=\alpha_0$. 
\end{claim}
\begin{proof}
Let $\lambda:=\frac{p}{p+r}$ and $\beta:=\frac{r}{p+r}$. 
Write $\mu_i$ as a convex combination of $x_1, \ldots, x_{m+1}$:  
\[
    \mu_i ~ =~ \lambda x_i+\beta x_0 ~= ~ \big( \lambda + \tfrac{\beta}{m+1} \big) x_i + \sum\nolimits_{1\le j\le m+1, j\ne i}\tfrac{\beta}{m+1} x_j.
\]
On the other hand, the posterior distribution $(q_0,q_1,\dots,q_{m+1})$ induces posterior mean
\[
    \mu_i ~=~ q_0x_0+\sum \nolimits_{j=1}^{m+1}q_jx_j ~ = ~ \big(q_i + \tfrac{q_0}{m+1}\big) x_i + \sum\nolimits_{1\le j\le m+1, j\ne i} \big( q_j + \tfrac{q_0}{m+1} \big) x_j. \]
Because $x_1, \ldots, x_{m+1}$ are affinely independent, the convex decomposition of $\mu_i$ is unique, so we have: 
\[
    q_i+\tfrac{q_0}{m+1} = \lambda+\tfrac{\beta}{m+1}, \qquad q_j+\tfrac{q_0}{m+1} = \tfrac{\beta}{m+1} \quad \text{ for every } j\neq i. 
\]

Thus either $q_j=0$ for all $j\neq i$, or $q_j>0$ for all $j\neq i$. Suppose,
toward a contradiction, that $q_j>0$ for some, equivalently all, $j\neq i$.
Then the convex cell corresponding to $s_\ell$ contains all of
$x_1,\dots,x_{m+1}$. Since $x_0$ lies in the interior of their convex hull, $x_0$
is an interior point of this cell. Therefore the convex-partitional policy must send
$s_\ell$ at $x_0$ with probability one, i.e. $\alpha_0=1$.
But then
\[
    \frac{q_i}{q_0} ~=~ \frac{\alpha_i p}{\alpha_0 r} ~\le~ \frac{p}{r}.
\]
On the other hand, we note that $q_j > 0$ and $q_j = \frac{\beta-q_0}{m+1}$ implies: 
\begin{equation*}
    q_0 < \beta \qquad \text{ and } \qquad q_i ~=~ \lambda + \tfrac{\beta - q_0}{m+1} ~>~ \lambda. 
\end{equation*}
Thus, we have $\frac{q_i}{q_0} > \frac{\lambda}{\beta} = \frac{p}{r}$, which leads to a contradiction. 
Therefore $q_j=0$ for all $j\neq i$.

It follows that $q_0=\beta$ and $q_i=\lambda$. Hence $\frac{\alpha_i p}{\alpha_0 r} = \frac{q_i}{q_0} = \frac{\lambda}{\beta} = \frac{p}{r}$, 
so $\alpha_i=\alpha_0$.
\end{proof}

We now bound the expected utility of $\cvxsignaling$. Let $\mathcal L_i$ be the set of signals that
induce posterior mean $\mu_i$. By \Cref{claim:structural}, for any
$\ell\in\mathcal L_i$, $\Pr(s_\ell) = \alpha_i p+\alpha_0 r = \alpha_0(p+r)$. 
Thus
\[
\begin{aligned}
    U(\cvxsignaling) & = \sum\nolimits_{i=1}^{m+1}\sum\nolimits_{\ell\in\mathcal L_i}\Pr(s_\ell) \\
    & = (p+r)\sum\nolimits_{i=1}^{m+1}\sum\nolimits_{\ell\in\mathcal L_i}\alpha_0 ~ \le ~ (p+r)\sum\nolimits_{\ell}\cvxsignaling(s_\ell\mid x_0) ~ = ~ p+r.
\end{aligned}
\]
By definition, $p+r = \left(\frac{1}{m+1}-\eps\right)+(m+1)\eps = \frac{1}{m+1}+m\eps$, so $U(\cvxsignaling) \le \frac{1}{m+1}+m\eps$.

\subsection{Proof of \Cref{lem:rectangle-convex}}
\label{proof:rectangle-convex}

\begin{lemma}\label{lemCentroidConvexSetBound}
	Let $S \subseteq [0, 1]^m$ be a convex set with centroid $\mu$. For any coordinate $k \in [m]$, $\max_{p \in S} p_{k} \leq (m + 1) \mu_{k}$. 
\end{lemma}

\begin{proof} 
We use the following classical Minkowski-Radon inequality \citep{stephen_grunbaums_2017}: for any convex body
$K\subseteq\reals^m$ with centroid $g(K)$, 
any direction $v\in\reals^m$,
\[
    h_K(v)-\langle g(K),v\rangle ~\le~ m\bigl(\langle g(K),v\rangle-h_K(-v)\bigr),
\]
where $h_K(v):=\max_{x\in K}\langle x,v\rangle$. 
Apply this to $K=S$ and $v=e_k$. Since $g(S)=\mu$, we get
\[
    \max_{p\in S}p_{k}-\mu_{k} ~ \le ~ m\big(\mu_{k} - \min_{p\in S}p_{k}\big).
\]
Because $S\subseteq[0,1]^m$, we have $\min_{p\in S}p_{k}\ge 0$. Therefore 
$\max_{p\in S}p_{k}\le (m+1)\mu_{k}$. 
\end{proof}

\begin{lemma}\label{lemRectangleSubdivision}
    For any non-overlapping axis-aligned rectangles $R_1, \dots, R_n \subseteq S \subseteq \rr^m$, where $S$ is also an axis-aligned rectangle, there exists a decomposition of $S$ into at most $(1 + 4^m)n$ axis-aligned rectangles, including $R_1, \dots, R_n$.
\end{lemma}

\begin{proof}
	Starting with a single part equal to $S$, we iteratively build the decomposition by placing one $R_i$ at a time, subdividing $S$ as we go. With each placement, we preserve the property that each part is a rectangle.
    For a given step $1 \leq i \leq n$, we first form a new part for $R_i$.
    For each previous part that overlaps with $R_i$, remove $R_i$ for the previous part.
    This increases the number of parts by 1 and leaves potentially non-rectangular regions behind.
    Next, for each of the $2^m$ corners $c$ of $R_i$, we slice the region $P_c$ previously containing $c$ by $m$ axis-aligned orthogonal hyperplanes through $c$, resulting in a partition of $P_c$ into at most $2^m$ rectangular regions. In total, we have added $1 + 2^m \cdot 2^m$ new regions for each $i$. We added fewer regions than this at the very first time because there was only one region; thus the bound we finally get is $(1 + 4^m)n$ rather than $(1 + 4^m)n + 1$. 
\end{proof}

\begin{proof}[Proof of \Cref{lem:rectangle-convex}]
	Throughout this proof, $d(\cdot, \cdot)$ denotes Euclidean distance and $\vol(\cdot)$ denotes $m$-dimensional volume in $\rr^m$. For a $K$-convex partitional information policy $\cvxsignaling$ with 
    parts $P_1, P_2, \dots, P_K$ 
    and corresponding centroids $\mu^1, \mu^2, \dots, \mu^K$ where $u(\mu^1) \geq u(\mu^2) \geq \cdots \ge u(\mu^K)$, we can write the designer's utility as the telescoping sum 
	\begin{equation}\label{equUtilGrid}
		U(\cvxsignaling) ~=~ \sum\nolimits_{i = 1}^{K} u(\mu^i) \vol(P_i) ~=~ \sum \nolimits_{i = 1}^K \big(u(\mu^i) - u(\mu^{i + 1})\big) \vol\Big(\bigcup\nolimits_{j = 1}^i P_j \Big)
	\end{equation}
	(where we define $u(\mu^{K + 1}) := 0$).
	
    We define an alternative convex-partitional signaling schemes $\cvxsignaling_1$. Let $A$ be the set of the $\lfloor K / (1 + 4^m) \rfloor$ indices $i \in [K]$ maximizing $u(\mu^i) \vol(P_i)$. Next, we partition $A$ into at most $2^m$ subsets based on which orthant $\mu^i$ is contained in. Let $B \subseteq A$ be the subset maximizing $\sum_{i \in B} u(\mu^i) \vol(P_i)$. Without loss of generality, assume that the orthant corresponding to $B$ is $[0, 1/2]^m$. Write $n := \abs{B}$. Let $\cvxsignaling_1$ be the signaling scheme where the only valuable parts are $P_i$ for $i \in B$; i.e., we ignore all other parts and assume that they give the designer utility zero. Let us call these parts $P^1_i$. Clearly,
	$$U(\cvxsignaling_1) \geq \frac{\lfloor K / (1 + 4^m) \rfloor}{K 2^m} \cdot U(\cvxsignaling).$$
    
    Fix any $x \ge 0$. We will construct a $K$-axis-aligned rectangular partitional signaling scheme $\pi^{\cc{rect}}$ 
    satisfying the following; choosing $x=1$ will prove the lemma. 
	$$U(\pi^{\cc{rect}}) \geq \frac{(2x)^m}{(m + 1)^m (x + 1)^m ((2x + 1)^m - 1)} \cdot U(\cvxsignaling_1).$$
	
	Say that a region $Z \subseteq \rr_{\geq 0}^m$ is \emph{staircase-shaped} if 
    for any $p = (p_{1}, \ldots, p_{m}) \in Z$, $j \in [m]$, $\alpha \in [0, 1]$, we have $(p_{1}, \dots, p_{j - 1}, \alpha p_{j}, p_{j + 1}, \dots, p_{m}) \in Z$.
	We define the \emph{staircase closure} of a set $S \subseteq \rr_{\geq 0}^m$ to be the smallest staircase shape containing $S$. We construct $\pi^{\cc{rect}}$ via the following greedy algorithm. For each $i$ from 1 to $n$, we iteratively add a rectangular part $P^2_i$ with centroid $\mu^i$ as defined below. We will then denote by $Z_i$ the staircase closure of all parts added so far, with $Z_0 := \{\bm 0\}$. On each iteration $i$, we first let $p^i$ be the point on the ray from the origin to $\mu^i$ of farthest distance from the origin such that $p^i \in Z_{i - 1}$. If $d(p^i, \mu^i) \leq x \cdot d(\bm 0, p^i)$, we skip this index, i.e., just define $P^2_i$ to be the empty set. Otherwise, we let $P^2_i$ be the rectangle centered at $\mu^i$ with one corner at $p^i$, which is disjoint from $Z_{i - 1}$ except at the boundary. \Cref{fig2dGreedyAlgorithm} shows 
    these two possible cases in $m = 2$ dimensions.
	
	\ipncm{.25}{2dGreedyAlgorithm3}{\label{fig2dGreedyAlgorithm}The first 5 iterations of the greedy algorithm 
    to construct $\pi^{\cc{rect}}$. 
    On iteration $i = 4$, the centroid $\mu_4$ is not far enough away from $Z_3$ (the darker shaded region), so a rectangle is not added, and $Z_4$ remains the same as $Z_3$. On iteration $i = 5$, a new rectangle is added, and $Z_5$ expands to include the lighter shaded region.}
	
	In light of Eqn.~\eqref{equUtilGrid}, it suffices to show that
	$$\frac{\vol\big(\bigcup\nolimits_{j = 1}^i P^2_j\big)}{\vol\big(\bigcup\nolimits_{j = 1}^i P^1_j\big)} \geq \frac{(2x)^m}{(m + 1)^m (x + 1)^m ((2x + 1)^m - 1)}.$$
	This follows from the following two claims:
	\begin{enumerate}[label=(\arabic*)]
		\item\label{itmZBig} $\vol(Z_i) \geq \frac{1}{(m + 1)^m (x + 1)^m} \vol(\bigcup_{j = 1}^i P^1_j)$.
		\item\label{itmZValuable} At least $\frac{(2x)^m}{(2x + 1)^m - 1}$
		of the volume of $Z_i$ is composed of the rectangles $\bigcup_{j=1}^i P_j^2$.
	\end{enumerate}

    To prove \ref{itmZBig}, we note that after processing $j$, the staircase shape $Z_j$ contains the box $\big[\bm 0, \mu^j / (x+1) \big]$
    for the following reason: If $j$ is skipped, then $d(\mu^j, p^j) \le x \cdot d(\bm 0, p^j)$, so $p^j \ge \mu^j / (x+1)$ coordinatewise. Since $Z_{j-1}$ contains $p^j$, it contains $[\bm 0, p_j] \supseteq [\bm 0, \mu^j / (x+1)]$. If $j$ is not skipped, then the newly added rectangle has centroid $\mu^j$, so the staircase closure contains $[\bm 0, \mu^j] \supseteq [\bm 0, \mu^j / (x+1)]$. Therefore, for every $i$, $Z_i \supseteq \bigcup \nolimits_{j \le i} [\bm 0, \mu^j/(x+1)]$. 
    By \Cref{lemRectangleSubdivision}, each convex part $P_j^1 \subseteq [\bm 0, (m+1) \mu^j]$. Hence $\bigcup_{j\le i} P_j^1 \subseteq (m+1)(x+1)Z_i$. The volume then satisfies $\vol(Z_i) \ge \frac{1}{(m+1)^m (x+1)^m} \vol(\bigcup_{j\le i} P_j^1)$.  
	
    We then prove \ref{itmZValuable} by 
    showing that, on each iteration $i$ where a new rectangle is added, the volume of the new rectangle is at least $\frac{(2x)^m}{(2x + 1)^m - 1}$ fraction of the volume of the added part of the staircase closure $Z_i \setminus Z_{i - 1}$. Let $y > x$ be such that $d(p^i, \mu^i) = y \cdot d(\bm 0, p^i)$.
    The new rectangle have corners at $p^i$ and $2\mu^i - p^i = (1+2y)p_i$, so its volume is $\prod_{k=1}^m 2yp_k^i$. 
    Since $p^i \in Z_{i - 1}$, the rectangle $[\bm 0, p^i]$ is disjoint from $Z_i \setminus Z_{i - 1}$, so the volume of $Z_i \setminus Z_{i - 1}$ is at most $\vol([\bm 0, 2\mu^i-p_i]) - \vol([\bm 0, p^i]) = \prod_{k=1}^m (1+2y)p_k^i - \prod_{k=1}^m p_k^i$. 
    Thus, the ratio of the new rectangle's volume to the volume of $Z_i \setminus Z_{i - 1}$ is at least
	\begin{equation*}
		\frac{\prod\nolimits_{k=1}^m 2yp_k^i}{\prod\nolimits_{k=1}^m (1+2y)p_k^i - \prod\nolimits_{k=1}^m p_k^i} ~= ~\frac{(2y)^m}{(2y + 1)^m - 1} ~ > ~ \frac{(2x)^m}{(2x + 1)^m - 1}.
	\end{equation*}
	
	Thus, we have shown that we can construct $n \leq \lfloor K/(1 + 4^m) \rfloor$ disjoint rectangles that collectively achieve the target fraction of utility. By \Cref{lemRectangleSubdivision}, we can extend this into a $K$-rectangular signaling scheme $\pi^{\cc{rect}}$ covering the entire state space $[0, 1]^m$. 
\end{proof}

\subsection{Proof of \Cref{thm:high-dimension-negative}}
\label{proof:high-dimension-negative}
Consider the example where the interim utility $u$ is $1$ at only one point $p = (\frac{1}{m+1}, \ldots, \frac{1}{m+1}) \in [0, 1]^m$, and $0$ elsewhere.  Let $\Delta_c^m = \{x \in [0, 1]^m : x_1 + \cdots + x_m \le 1 \}$. 
Let $\pi^*$ send one signal for all states in $\Delta_c^m$, and another signal for $[0, 1]^m \setminus \Delta_c^m$. Under uniform prior, the posterior mean induced by the first signal is equal to the centroid of $\Delta_c^m$, which equals $p$: 
\begin{equation*}
    \mu(\Delta_c^m) ~ = ~ \frac{1}{\vol(\Delta_c^m)} \int_{\Delta_c^m} x \;\dd x ~ = ~ \Big(\frac{1}{m+1}, \ldots, \frac{1}{m+1} \Big) ~ = ~ p\,. 
\end{equation*}
The total probability of the first signal is equal to the volume $\vol(\Delta_c^m) ~ = ~ \frac{1}{m!} |\mathrm{det}(I)| ~ = ~ \frac{1}{m!}$. 
So, the expected utility of $\pi^*$ is $U(\pi^*) = \frac{1}{m!} \cdot u(p) = \frac{1}{m!}$.

For any axis-aligned rectangular signaling scheme $\pi^{\mathrm{rect}}$ to obtain positive utility, at least one rectangle should have its centroid at $p$. The largest rectangle inside $[0, 1]^m$ with centroid at $p$ is $\big[0, \frac{2}{m+1}\big]^m$, which has volume $\vol\big(\big[0, \tfrac{2}{m+1}\big]^m\big) = \big( \tfrac{2}{m+1} \big)^m$. 
Thus, the ratio between the expected utilities of $\pi^{\mathrm{rect}}$ and $\pi^*$ is at most
\begin{align*}
    \frac{U(\pi^{\mathrm{rect}})}{U(\pi^*)} ~ \le ~ \frac{\big( \tfrac{2}{m+1} \big)^m}{\tfrac{1}{m!}} \stackrel{\text{Stirling's approximation}}{\le} \big( \tfrac{2}{m+1} \big)^m \sqrt{2\pi m} \big(\tfrac{m}{e}\big)^m e^{\frac{1}{12m}} ~ \le ~ \big(\tfrac{2}{e} \big)^m O(\sqrt m), 
\end{align*}
which implies $\PoE^{\cc{rect}}(\instance, \numSignals) \le \big(\frac{2}{e} \big)^m O(\sqrt m)$ and proves the theorem. 


\bibliography{bibfile}




\end{document}